\DeclareMathOperator{\PTIME}{\mathbf{P}}
\DeclareMathOperator{\NP}{\mathbf{NP}}
\DeclareMathOperator{\Psp}{\mathbf{PSPACE}}
\DeclareMathOperator{\EXPSPACE}{\mathbf{EXPSPACE}}
\DeclareMathOperator{\LINAEXPTIME}{\mathbf{AEXP_{pol}}}
\DeclareMathOperator{\co}{\mathbf{co-}\!}
\newcommand{\Th}{\PTIME^{\NP[O(\log n)]}}
\newcommand{\Thsq}{\PTIME^{\NP[O(\log^2 n)]}}
\DeclareMathOperator{\Pref}{Pref}
\DeclareMathOperator{\Suff}{Suff}
\DeclareMathOperator{\lst}{lst}
\DeclareMathOperator{\fst}{fst}
\DeclareMathOperator{\hsA}{\langle A\rangle}
\DeclareMathOperator{\hsL}{\langle L\rangle}
\DeclareMathOperator{\hsB}{\langle B\rangle}
\DeclareMathOperator{\hsE}{\langle E\rangle}
\DeclareMathOperator{\hsD}{\langle D\rangle}
\DeclareMathOperator{\hsO}{\langle O\rangle}
\DeclareMathOperator{\hsX}{\langle X\rangle}
\DeclareMathOperator{\hsAt}{\langle \overline{A}\rangle}
\DeclareMathOperator{\hsLt}{\langle \overline{L}\rangle}
\DeclareMathOperator{\hsBt}{\langle \overline{B}\rangle}
\DeclareMathOperator{\hsEt}{\langle \overline{E}\rangle}
\DeclareMathOperator{\hsDt}{\langle \overline{D}\rangle}
\DeclareMathOperator{\hsOt}{\langle \overline{O}\rangle}
\newcommand{\A}{\mathsf{A}}
\newcommand{\Abar}{\mathsf{\overline{A}}}
\newcommand{\AAbar}{\mathsf{A\overline{A}}}
\newcommand{\AAbarBBbar}{\mathsf{A\overline{A}B\overline{B}}}
\newcommand{\AAbarEEbar}{\mathsf{A\overline{A}E\overline{E}}}
\newcommand{\ABbar}{\mathsf{A\overline{B}}}
\newcommand{\Bbar}{\mathsf{\overline{B}}}
\newcommand{\Ebar}{\mathsf{\overline{E}}}
\newcommand{\B}{\mathsf{B}}
\newcommand{\E}{\mathsf{E}}
\newcommand{\AbarE}{\mathsf{\overline{A}E}}
\newcommand{\BEbar}{\mathsf{B\overline{E}}}
\newcommand{\AAbarBBbarEbar}{\mathsf{A\overline{A}B\overline{B}\overline{E}}}
\newcommand{\AAbarEBbarEbar}{\mathsf{A\overline{A}E\overline{B}\overline{E}}}
\newcommand{\HSprop}{\mathsf{Prop}}
\newcommand{\Nat}{{\mathbb{N}}}
\newcommand{\Trans}{\textit{R}}
\DeclareMathAlphabet{\mathpzc}{OT1}{pzc}{m}{it}
\newcommand{\epist}{KC}
\newcommand{\Instance}{\mathcal{I}}
\newcommand{\Init}{\textit{Init}}
\newcommand{\Ku}{\ensuremath{\mathpzc{K}}}
\newcommand{\Prop}{\mathpzc{AP}}
\newcommand{\Lab}{\mu}
\newcommand{\GLab}{\textit{Lab}}
\newcommand{\tpl}[1]{(#1)}
\newcommand{\Lang}{{\mathcal{L}}}
\newcommand{\M}{{\mathcal{M}}}
\newcommand{\WS}{{\mathcal{W}}}
\newcommand{\FMC}{\mathsf{FMC}}
\newcommand{\RE}{\mathsf{RE}}
\newcommand{\acc}{\textit{acc}}
\DeclareMathOperator{\depthb}{d_B}
\theoremstyle{plain}
    \newtheorem{proposition}{Proposition}
    \newtheorem{theorem}[proposition]{Theorem}
    \newtheorem{lemma}[proposition]{Lemma}
\theoremstyle{definition}
    \newtheorem{definition}[proposition]{Definition}
    \newtheorem{property}[proposition]{Property}
\theoremstyle{remark}
    \newtheorem{remark}[proposition]{Remark}
\newcommand{\NFA}{\text{\sffamily NFA}}
\newcommand{\HS}{\text{\sffamily HS}}
\newcommand{\CTLStar}{\text{\sffamily CTL$^{*}$}}
\newcommand{\Au}{\ensuremath{\mathcal{A}}}
\newcommand{\SPEC}{\textsf{spec}}
\newcommand{\SD}{\textsf{SD}}
\newcommand{\Summary}{\ensuremath{\mathcal{S}}}
\newcommand{\PrefS}{{\mathit{PS}}}
\newcommand{\End}{\textit{end}}
\newcommand{\IMT}{\textit{IMT}}
\newcommand{\Pos}{\textit{Pos}}
\newcommand{\PNF}{\textit{PNF}}
\newcommand \tuple[1]{\langle #1 \rangle}
\newcommand{\AltN}{\Upsilon}
\newdimen\boxfigwidth 
\def\bigbox{\begingroup
  \boxfigwidth=\hsize
  \advance\boxfigwidth by -2\fboxrule
  \advance\boxfigwidth by -2\fboxsep
  \setbox4=\vbox\bgroup\hsize\boxfigwidth
  \hrule height0pt width\boxfigwidth\smallskip%
  \linewidth=\boxfigwidth
}
\def\endbigbox{\smallskip\egroup\fbox{\box4}\endgroup}
\title{On the Complexity of Model Checking for\\
Syntactically Maximal Fragments of\\ 
the Interval Temporal Logic HS with Regular Expressions\thanks{
The work by Alberto Molinari and Angelo Montanari has been supported by the GNCS project \emph{Logic and Automata for Interval Model Checking}.
}}
\author{Laura Bozzelli \qquad Adriano Peron
\institute{University of Napoli ``Federico II'', Napoli, Italy}
\email{lr.bozzelli@gmail.com \qquad adrperon@unina.it}
\and
Alberto Molinari \qquad Angelo Montanari
\institute{University of Udine, Udine, Italy}
\email{molinari.alberto@gmail.com \qquad angelo.montanari@uniud.it}
}
\begin{document}
\maketitle

\begin{abstract}
%
%
%

In this paper, we investigate the model checking (MC) problem for Halpern and Shoham's interval temporal logic $\HS$.
In the last years,  interval temporal logic MC has received an increasing attention as a viable alternative to the traditional (point-based) temporal logic MC, which can be recovered as a special case.
Most results have been obtained under the homogeneity assumption,
that constrains a proposition letter to hold over an interval if and only if it holds over each component state.
Recently, Lomuscio and Michaliszyn proposed a way to relax such an assumption by exploiting regular expressions to define the behaviour of proposition letters over intervals in terms of their component states. 
When homogeneity is assumed, the exact complexity of MC is a difficult open question for full $\HS$ and for its two syntactically maximal fragments $\A\Abar\B\Bbar\Ebar$ and $\A\Abar\E\Bbar\Ebar$. 
In this paper, we provide an asymptotically
optimal bound to the complexity of these two fragments under the more expressive semantic variant based on regular expressions by showing that their MC problem is $\LINAEXPTIME$-complete, where $\LINAEXPTIME$ denotes the complexity class of problems decided by exponential-time bounded alternating Turing Machines making a polynomially bounded number of alternations. 
\end{abstract}

\section{Introduction}\label{sect:intro}
 Model checking (MC), which allows one to automatically check whether a model of a given system satisfies a desired behavioural property, is commonly recognized as one of the most effective techniques in automatic system verification.
Besides in formal verification, it has been successfully used also in more general contexts 
(e.g., databases, planning, configuration systems, multi-agent systems \cite{DBLP:conf/ecp/GiunchigliaT99,DBLP:conf/tacas/LomuscioR06}).
%
 The actual possibility of exploiting MC relies on a good balance of expressiveness and complexity in the choice of the system model and of the language for specifying behavioural properties. Systems are usually modeled as finite state-transition graphs (finite Kripke structures), while properties are commonly expressed by  formulas of point-based temporal logics, such as LTL, CTL, and CTL$^{*}$~\cite{pnueli1977temporal,emerson1986sometimes}.


In this paper, we focus on MC with interval temporal logic (ITL) as the specification language. ITL features intervals, instead of points, as its primitive temporal entities~\cite{HS91,digitalcircuitsthesis,Ven90}. 
ITL allows one to deal with relevant temporal properties, such as actions with duration, accomplishments, and temporal aggregations, which are inherently ``interval-based'' and cannot be properly expressed by point-based temporal logics. ITL has been fruitfully applied in various areas of computer science, including formal verification, computational linguistics, planning, and multi-agent systems~\cite{digitalcircuitsthesis,DBLP:journals/ai/PrattHartmann05,LM13}.



%
Among ITLs, the landmark is \emph{Halpern and Shoham's modal logic of time intervals} $\HS$~\cite{HS91}, which features one modality for each of the 13 ordering relations between pairs of intervals (the so-called Allen's relations~\cite{All83}), apart from equality.
(Actually, the three Allen's modalities \emph{meets} $\A$, \emph{started-by} $\B$, and \emph{finished-by} $\E$, together with the corresponding inverse modalities $\Abar$, $\Bbar$, and $\Ebar$, suffice for expressing the entire set of relations.)
The satisfiability problem for $\HS$ is undecidable over all relevant classes of linear orders~\cite{HS91}, and most of its fragments (with meaningful exceptions) are undecidable as well~\cite{DBLP:journals/amai/BresolinMGMS14,MM14}. 

The MC problem for $\HS$ and its fragments consists in the verification of the correctness of the behaviour of a given system with respect to interval properties expressed in $\HS$.
Each finite computation path is interpreted as an interval, and its labelling is defined on the basis of the labelling of the states occurring in the path. 
Most results have been obtained by imposing suitable restrictions on proposition letters labeling intervals: either a proposition letter can be constrained to hold over an interval if and only if it holds over each component state (\emph{homogeneity assumption}~\cite{Roe80}), or interval labeling can be defined in terms of the labeling of interval endpoints.

An almost complete picture of the MC problem for full $\HS$ and its fragments has been  recently depicted with the contribution of many works by Molinari et al.\ \cite{MMMPP15,MMP15B,MMP15,bmmps16,bmmps16b,MMMPP15,MMPS16}, which all consider MC over finite Kripke structures for $\HS$ endowed with a state-based semantics (allowing branching both in the past and in the future) enforcing the homogeneity assumption.
The summary of these results is depicted in the second column of Table~\ref{fig:overv} (the first column reports the fragments of $\HS$ denoted by the list of the featured modalities). The complexity classes shown in red represent new (upper/lower) bounds to the complexity of the problem deriving from the results of this paper, while the other classes (in black) are known bounds.
Only few, hard issues are left open in this picture, mostly regarding the precise complexity of the full logic and its maximal fragments.
A comparison of different semantic solutions (i.e., state-based semantics, trace-based semantics and computation-tree-based semantics), together with an expressiveness comparison with standard point-based temporal logics LTL, CTL, and CTL$^{*}$ can be found in~\cite{bmmps16c}.

Different assumptions have been done by Lomuscio and Michaliszyn in~\cite{LM13,LM14} for some $\HS$ fragments extended with epistemic operators ($\epist$). They assume a computation-tree-based semantics (formulae are interpreted over the unwinding of the Kripke structure) and interval labeling takes into account only the endpoints of intervals.  The different semantic assumptions prevent any immediate comparison with respect to the former approach. The decidability status of MC for full epistemic $\HS$ is still unknown. (A summary of the results by Lomuscio and Michaliszyn is depicted in the last column of Table~\ref{fig:overv}.) 

The first meaningful attempt to relax the homogeneity assumption can be found in~\cite{lm15}, where Lomuscio and Michaliszyn propose to use regular expressions to define the labeling of proposition letters over intervals in terms of the component states. Note that the homogeneity assumption can be trivially encoded  by regular expressions. In that work, the authors prove the decidability of MC with regular expressions for some very restricted fragments of epistemic $\HS$, giving some rough upper bounds to its computational complexity.  
A deeper insight into the problem of MC for $\HS$ with regular expressions can be found in~\cite{sefm2017} where, under the assumption of a state-based semantics, it is proved that MC with regular expressions for full $\HS$ is decidable, and that a large class of $\HS$ fragments  
can be checked in polynomial working space 
(see the third column of Table~\ref{fig:overv}). 

In this paper, we study the problems of MC for the two (syntactically) maximal (symmetric) fragments $\A\Abar\B\Bbar\Ebar$ and $\A\Abar\E\Bbar\Ebar$ with regular expressions, which are not covered by \cite{sefm2017}, proving that the complexity of both problems is $\LINAEXPTIME$-complete. $\LINAEXPTIME$ denotes the complexity class of problems decided by   exponential-time bounded alternating Turing Machines with a polynomially bounded number of alternations. 
Such a class captures the precise
complexity of some relevant problems \cite{tcs15l,FR75} (e.g., the first-order theory of real addition with order \cite{FR75}).
First, we note that  settling the exact complexity of these fragments under the homogeneity assumption (which can be encoded by regular expressions) is a difficult open question \cite{MMP15}. Moreover,  considering that $\LINAEXPTIME \subseteq \EXPSPACE$ and that $\HS$ under homogeneity  is subsumed by $\HS$ with regular expressions, the results proved in this paper improve  the upper bounds for the fragments  $\A\Abar\B\Bbar\Ebar$ and $\A\Abar\E\Bbar\Ebar$ given in 
\cite{MMP15}.

These results are obtained by preliminarily establishing an \emph{exponential-size model-trace property}: for each interval,  it is possible to find an interval of bounded exponential length that is indistinguishable with respect to the fulfillment of $\A\Abar\B\Bbar\Ebar$ formulas (resp., $\A\Abar\E\Bbar\Ebar$).
Such a property allows us to devise a MC procedure belonging to the class $\LINAEXPTIME$. 
Finally, the 
matching lower bounds are obtained 
by polynomial-time reductions from the so-called \emph{alternating multi-tiling problem}, and they already hold for the fragments $\B\Ebar$ and $\E\Bbar$ of $\A\Abar\B\Bbar\Ebar$ and $\A\Abar\E\Bbar\Ebar$, respectively. 

\begin{table}[t]
\centering
\caption{Complexity of MC for $\HS$ and its fragments ($^\dagger$local MC).}\label{fig:overv}
\begingroup
\renewcommand*{\arraystretch}{1.3}
\begin{tabular}{|@{\ }c@{\ }|@{\ }c@{\ }|@{\ }c@{\ }||@{\ }c@{\ }|}
\hline 
 & Homogeneity & Regular expressions & \cite{LM13} -- \cite{lm15}\\
\hline \hline 
\multirow{2}{*}{Full $\HS$, $\B\E$} & non-elem.  & non-elem.  & $\B\E$+$\epist^\dagger$: $\Psp$ \\
 & $\EXPSPACE$-hard & $\EXPSPACE$-hard & $\B\E^\dagger$: $\PTIME$\\
\hline 
\multirow{2}{*}{$\A\Abar\B\Bbar\Ebar,\A\Abar\E\Bbar\Ebar$} & $\in\EXPSPACE$ [\textcolor{red}{$\in\LINAEXPTIME$}]& non-elem  $\Psp$-hard &\\
 & $\Psp$-hard & [\textcolor{red}{$\LINAEXPTIME$-complete}]& \\
\hline 
\multirow{2}{*}{$\AAbar\Bbar\Ebar$} & \multirow{2}{*}{$\Psp$-complete} & non-elem [\textcolor{red}{$\in\LINAEXPTIME$}] & \\
 & & $\Psp$-hard & \\
\hline
$\AAbarBBbar,\B\Bbar,\Bbar,$ & \multirow{2}{*}{$\Psp$-complete} & \multirow{2}{*}{$\Psp$-complete} & \multirow{2}{*}{$\A\Bbar$+$\epist$: non-elem.}\\
$\AAbarEEbar,\E\Ebar,\Ebar$ & & & \\
\hline 
$\AAbar\B,\AAbar\E,\A\B,\Abar\E$ & $\PTIME^{\NP}\!$-complete & $\Psp$-complete & \\
\hline 
\multirow{2}{*}{$\A\Abar,\Abar\B,\A\E,\A,\Abar$} & $\in\Thsq$ & \multirow{2}{*}{$\Psp$-complete} & \\
 & $\Th$-hard &  & \\
\hline 
$\HSprop, \B,\E$ & $\co\NP$-complete & $\Psp$-complete & \\
\hline 
\end{tabular}

\endgroup
\end{table}
%

The paper is structured as follows. In Section~\ref{sec:backgr}, we introduce the logic $\HS$ and provide some background knowledge. 
In Section~\ref{sec:AAbarBBbarEbarTrackProperty} we prove the exponential-size model-trace property for $\A\Abar\B\Bbar\Ebar$. In Section~\ref{sec:UpperBound}, we provide an $\LINAEXPTIME$ upper bound to the MC problem for $\AAbarBBbarEbar$.  Finally, in Section~\ref{sec:LowerBound}, we prove the hardness of the fragment $\B\Ebar$. Similar proofs can be given for establishing the $\LINAEXPTIME$-completeness of $\A\Abar\E\Bbar\Ebar$, and the $\LINAEXPTIME$-hardness of $\E\Bbar$.

Due to space constraints, most of the proofs are omitted here: they can be found in \cite{preprintGand}.
\section{Preliminaries}\label{sec:backgr}

We introduce some preliminary notation.
Let $\Nat$ be the set of natural numbers. For all  $i,j\in\Nat $, with $i\leq j$, $[i,j]$ denotes the set of natural numbers $h$ such that $i\leq h\leq j$.

Let $\Sigma$ be an alphabet and $w$ be a  finite  word over $\Sigma$. We denote by $|w|$ the length of $w$. By $\varepsilon$ we denote the empty word. For all  $1\leq i\leq j\leq |w|$, $w(i)$ denotes the
$i$-th letter of $w$, while $w(i,j)$ denotes the finite subword of $w$ given by $w(i)w(i+1)\cdots w(j)$. For $|w|=n$, we define
$\fst(w)=w(1)$ and $\lst(w)=w(n)$. The sets of all proper prefixes and suffixes of $w$
are $\Pref(w)=\{w(1,i) \mid 1\leq i\leq n-1\}$ and $\Suff(w)=\{w(i,n)\mid 2\leq i\leq n\}$, respectively.
The concatenation of two words $w$ and $w'$ is denoted as usual by $w\cdot w'$. Moreover, if $\lst(w)=\fst(w')$, $w\star w'$ represents $w(1,n-1)\cdot w'$, where $n=|w|$ ($\star$-concatenation). 

\subsection{Kripke structures, regular expressions, and finite automata}

Finite state systems are usually modelled as finite Kripke structures. Let $\Prop$ be a finite set of proposition letters, which represent predicates decorating the states of the given system.

\begin{definition}[Kripke structure]
A \emph{Kripke structure} over  $\Prop$  is a tuple $\Ku=(\Prop,S, \Trans,\mu,s_0)$, where  $S$ is a set of states,
$\Trans\subseteq S\times S$ is a
transition relation,
$\mu:S\mapsto 2^\Prop$ is a total labelling function assigning to each state $s$ the set of propositions that hold over it, and $s_0\in S$ is the initial state. $\Ku$ is said finite if $S$ is finite.
\end{definition}



Let $\Ku=(\Prop,S, \Trans,\mu,s_0)$ be a Kripke structure.
A \emph{trace} (or finite path) of $\Ku$ is a non-empty finite word $\rho$ over $S$ such that $(\rho(i),\rho(i+1))\in \Trans$ for all $i\in [1,|\rho|-1]$. A trace is \emph{initial} if it starts from the initial state $s_0$.
A trace $\rho$ induces the finite word  $\mu(\rho)$ over $2^{\Prop}$ given by
 $\mu(\rho(1))\cdots \mu(\rho(n))$ with $n=|\rho|$. We call $\mu(\rho)$ the \emph{labeling sequence induced by} $\rho$.

Let us recall now the class of regular expressions over finite words. Since we are interested in expressing requirements over the labeling sequences
induced by the traces of Kripke structures, which are finite words over $2^{\Prop}$, here we consider \emph{propositional-based} regular expressions ($\RE$), where the atomic expressions are propositional formulas over $\Prop$ instead of letters over an alphabet. Formally, the set of $\RE$ $r$ over $\Prop$ is defined as \[ r ::= \varepsilon\;\vert\; \phi\;\vert\; r\cup r\;\vert\; r\cdot r\;\vert\; r^{*} ,\]
%
%
where $\phi$ is a propositional formula over $\Prop$. The size $|r|$ of an $\RE$ $r$ is the number of subexpressions of $r$.
 An $\RE$ $r$ denotes a language $\Lang(r)$ of finite words over $2^{\Prop}$ defined as:
\begin{compactitem}
  \item $\Lang(\varepsilon)=\{\varepsilon\}$;
  \item $\Lang(\phi)=\{A\in 2^{\Prop}\mid A \text{ satisfies }\phi\}$;
  \item $\Lang(r_1\cup r_2)=\Lang(r_1)\cup \Lang(r_2)$;
  \item $\Lang(r_1\cdot r_2)=\Lang(r_1)\cdot \Lang(r_2)$;
  \item $\Lang(r^{*})=(\Lang(r))^{*}$.
\end{compactitem}

We also recall the class of nondeterministic finite automata  over finite words (\NFA).  An $\NFA$ is a tuple
$\Au =\tpl{\Sigma,Q,Q_0,\Delta,F}$, where $\Sigma$ is a finite alphabet, $Q$ is a finite set of states, $Q_0\subseteq Q$ is the set of initial states,
$\Delta\subseteq Q\times \Sigma \times Q$ is the transition relation, and $F\subseteq Q$ is the set of accepting states.
An $\NFA$ $\Au$ is \emph{complete} if, for all $(q,\sigma)\in Q\times \Sigma$, $(q,\sigma,q')\in \Delta$ for some $q'\in Q$. Given a finite word $w$ over $\Sigma$ with $|w|=n$  and two states $q,q'\in Q$, a run of $\Au$ from $q$ to $q'$ over $w$ is a sequence of states $q_1,\ldots,q_{n+1}$ such that $q_1=q$, $q_{n+1}=q'$, and for all $i\in [1,n]$, $(q_i,w(i),q_{i+1})\in\Delta$. The language $\Lang(\Au)$  accepted by $\Au$ is the set of finite words $w$ on $\Sigma$ such that there is a
run from some initial state to some accepting state over $w$.

\begin{remark} Given a $\RE$ $r$, by a standard construction \cite{kleene56representation}, one can compositionally construct a complete $\NFA$ $\Au_r$ with alphabet $2^{\Prop}$, whose number of states is linear in the size of $r$. We call $\Au_r$ the \emph{canonical} $\NFA$ associated with $r$.
\end{remark}

\subsection{The interval temporal logic $\HS$}
A systematic logical study of interval representation and reasoning was proposed by J.\ Y.\ Halpern and Y.\ Shoham, who introduced the interval temporal logic $\HS$~\cite{HS91} featuring one modality for each Allen
relation~\cite{All83}, but equality.
Table~\ref{allen} depicts 6 of the 13 Allen's relations,
together with the corresponding $\HS$ (existential) modalities. The other 7 relations are the 6 inverse relations (given a binary relation $\mathpzc{R}$, its inverse $\overline{\mathpzc{R}}$ is such that $b \overline{\mathpzc{R}} a$ iff $a \mathpzc{R} b$) and equality.

\begin{table}[tb]
\renewcommand{\arraystretch}{1.2}
\centering
\caption{Allen's relations and corresponding $\HS$ modalities.}\label{allen}
\begin{tabular}{cclc}
\hline
\rule[-1ex]{0pt}{3.5ex} Allen relation & $\HS$ & Definition w.r.t. interval structures &  Example\\
\hline

&   &   & \multirow{7}{*}{\begin{tikzpicture}[scale=1.15]
\draw[draw=none,use as bounding box](-0.3,0.2) rectangle (3.3,-3.1);
\coordinate [label=left:\textcolor{red}{$x$}] (A0) at (0,0);
\coordinate [label=right:\textcolor{red}{$y$}] (B0) at (1.5,0);
\draw[red] (A0) -- (B0);
\fill [red] (A0) circle (2pt);
\fill [red] (B0) circle (2pt);

\coordinate [label=left:$v$] (A) at (1.5,-0.5);
\coordinate [label=right:$z$] (B) at (2.5,-0.5);
\draw[black] (A) -- (B);
\fill [black] (A) circle (2pt);
\fill [black] (B) circle (2pt);

\coordinate [label=left:$v$] (A) at (2,-1);
\coordinate [label=right:$z$] (B) at (3,-1);
\draw[black] (A) -- (B);
\fill [black] (A) circle (2pt);
\fill [black] (B) circle (2pt);

\coordinate [label=left:$v$] (A) at (0,-1.5);
\coordinate [label=right:$z$] (B) at (1,-1.5);
\draw[black] (A) -- (B);
\fill [black] (A) circle (2pt);
\fill [black] (B) circle (2pt);

\coordinate [label=left:$v$] (A) at (0.5,-2);
\coordinate [label=right:$z$] (B) at (1.5,-2);
\draw[black] (A) -- (B);
\fill [black] (A) circle (2pt);
\fill [black] (B) circle (2pt);

\coordinate [label=left:$v$] (A) at (0.5,-2.5);
\coordinate [label=right:$z$] (B) at (1,-2.5);
\draw[black] (A) -- (B);
\fill [black] (A) circle (2pt);
\fill [black] (B) circle (2pt);

\coordinate [label=left:$v$] (A) at (1.3,-3);
\coordinate [label=right:$z$] (B) at (2.3,-3);
\draw[black] (A) -- (B);
\fill [black] (A) circle (2pt);
\fill [black] (B) circle (2pt);

\coordinate (A1) at (0,-3);
\coordinate (B1) at (1.5,-3);
\draw[dotted] (A0) -- (A1);
\draw[dotted] (B0) -- (B1);
\end{tikzpicture}}\\

\textsc{meets} & $\hsA$ & $[x,y]\mathpzc{R}_A[v,z]\iff y=v$ &\\

\textsc{before} & $\hsL$ & $[x,y]\mathpzc{R}_L[v,z]\iff y<v$ &\\

\textsc{started-by} & $\hsB$ & $[x,y]\mathpzc{R}_B[v,z]\iff x=v\wedge z<y$ &\\

\textsc{finished-by} & $\hsE$ & $[x,y]\mathpzc{R}_E[v,z]\iff y=z\wedge x<v$ &\\

\textsc{contains} & $\hsD$ & $[x,y]\mathpzc{R}_D[v,z]\iff x<v\wedge z<y$ &\\

\textsc{overlaps} & $\hsO$ & $[x,y]\mathpzc{R}_O[v,z]\iff x<v<y<z$ &\\

\hline
\end{tabular}
\end{table}

Given a finite set $\mathpzc{P}_u$  of \emph{uninterpreted interval properties},
the $\HS$ language over $\mathpzc{P}_u$ consists of propositions from $\mathpzc{P}_u$, the Boolean connectives $\neg$ and $\wedge$,
and a temporal modality for each of the (non trivial) Allen's relations, i.e., $\hsA$, $\hsL$, $\hsB$, $\hsE$, $\hsD$, $\hsO$, $\hsAt$, $\hsLt$, $\hsBt$, $\hsEt$, $\hsDt$, and $\hsOt$.
$\HS$ formulas are defined by the grammar \[\psi ::= p_u \;\vert\; \neg\psi \;\vert\; \psi \wedge \psi \;\vert\; \langle X\rangle\psi,\]
where $p_u\in\mathpzc{P}_u$ and $X\in\{A,L,B,E,D,O,\overline{A},\overline{L},\overline{B},\overline{E},\allowbreak\overline{D},\overline{O}\}$.
We  also exploit the standard logical connectives (disjunction $\vee$ and implication $\rightarrow$) 
 as abbreviations.
Furthermore, for any existential modality $\langle X\rangle$, the dual universal modality $[X]\psi$ is defined as $\neg\langle X\rangle\neg\psi$. 

An $\HS$ formula $\varphi$ is in  \emph{positive normal form} (\PNF) if negation is applied only to atomic formulas in $\mathpzc{P}_u$.
By using De Morgan's laws and for any existential modality $\tuple{X}$, the dual universal modality $[X]$,  we can  convert in linear-time an $\HS$ formula $\varphi$ into an equivalent formula in \PNF, called the \PNF\ of $\varphi$. For a formula   $\varphi$ in \PNF, the \emph{dual}
$\widetilde{\varphi}$ of  $\varphi$ is the \PNF\ of $\neg\varphi$.

Given any subset of Allen's relations $\{X_1,\ldots,X_n\}$, we denote by $\mathsf{X_1 \cdots X_n}$ the $\HS$ fragment closed under Boolean connectives that features (existential and universal) 
modalities for $X_1 ,\ldots, X_n$ only.

Without loss of generality, we assume the \emph{non-strict semantics of $\HS$}, which admits intervals consisting of a single point. (All the results we prove in the paper hold for the strict semantics as well.) Under such an assumption, all $\HS$ modalities can be expressed in terms of modalities
$\hsB, \hsE, \hsBt$, and $\hsEt$ \cite{Ven90}. 
$\HS$ can, thus, be viewed as a multi-modal logic with
4 primitive modalities. However, since  we  focus on the $\HS$ fragments $\AAbarEBbarEbar$ and $\AAbarBBbarEbar$, that do not feature $\hsB$  and $\hsE$ respectively, we also consider the modalities $\hsA$ and $\hsAt$. Note that the modalities $\hsL$ and $\hsO$ (resp., $\hsLt$ and $\hsOt$) can be expressed in the fragment $\AAbarEBbarEbar$ (resp., $\AAbarBBbarEbar$).

As for the semantics of $\HS$, in this paper we follow the approach of~\cite{sefm2017}, 
where the intervals correspond  to the traces of a finite Kripke  structure $\Ku$ (\emph{state-based semantics}) and
each abstract interval proposition $p_u\in \mathpzc{P}_u$ denotes a regular language of finite words over $2^{\Prop}$. More specifically, every
abstract interval proposition  $p_u$ is a (propositional-based) regular expression over $\Prop$. 
Thus, in the following, for the sake of simplicity, by an  $\HS$ formula over $\Prop$ we mean an $\HS$ formula whose abstract interval propositions (or atomic formulas) are $\RE$ over $\Prop$.

Given a Kripke structure
$\Ku=(\Prop,S, E,\mu,s_0)$ over $\Prop$, a trace $\rho$ of $\Ku$, and an $\HS$ formula $\varphi$ over $\Prop$, the satisfaction relation $\Ku,\rho\models \varphi$ is inductively defined as follows (we omit the standard clauses for the Boolean connectives):
\[ \begin{array}{ll}
\Ku,\rho   \models r  &  \Leftrightarrow  \mu(\rho) \in \Lang(r) \text{ for each $\RE$ $r$ over $\Prop$},\\
\Ku,\rho  \models \hsB\varphi  & \Leftrightarrow   \text{there exists } \rho'\in\Pref(\rho) \text{ such that }\Ku,\rho'   \models \varphi,\\
\Ku,\rho  \models \hsE\varphi  & \Leftrightarrow   \text{there exists } \rho'\in\Suff(\rho) \text{ such that }\Ku,\rho'   \models \varphi,\\
\Ku,\rho   \models \hsBt\varphi  & \Leftrightarrow   \Ku,\rho'   \models \varphi \text{ for some trace } \rho' \text{ such that } \rho\in\Pref(\rho') ,\\
\Ku,\rho   \models \hsEt\varphi  & \Leftrightarrow   \Ku,\rho'   \models \varphi \text{ for some trace } \rho' \text{ such that } \rho\in\Suff(\rho') ,\\
\Ku,\rho   \models \hsA\varphi  & \Leftrightarrow   \Ku,\rho'   \models \varphi \text{ for some trace } \rho' \text{ such that } \fst(\rho')=\lst(\rho) ,\\
\Ku,\rho   \models \hsAt\varphi  & \Leftrightarrow   \Ku,\rho'   \models \varphi \text{ for some trace } \rho' \text{ such that } \lst(\rho')=\fst(\rho).
\end{array} \]

$\Ku$ is a \emph{model} of $\varphi$, denoted $\Ku\models \varphi$, if for all initial traces $\rho$ of $\Ku$, it holds that $\Ku,\rho\models \varphi$. The MC problem for $\HS$ is checking, for a finite Kripke structure $\Ku$ and an $\HS$ formula $\varphi$, whether $\Ku\models \varphi$ or not. 

Note that the state-based semantics provides a branching-time setting both in the past and in the future. In particular, while the modalities for $B$ and $E$ are linear-time (they allow us to select prefixes and suffixes of the current trace), the modalities for $A$ and $\overline{B}$ (resp., $\overline{A}$ and $\overline{E}$) are branching-time in the future (resp., in the past) since they allow us to nondeterministically extend a trace in the future (resp., in the past). As shown in \cite{bmmps16c}, for the considered semantics, the logics $\HS$ and $\CTLStar$ are expressively incomparable already under the homogeneity assumption. However, under the homogeneity assumption, the use of  the past branching-time modalities $\overline{A}$ and $\overline{E}$ is necessary for capturing requirements which cannot be expressed in $\CTLStar$. For instance, the requirement 
``\emph{each state reachable from the initial one where $p$ holds has a predecessor where $p$ holds as well}'' cannot be expressed in \CTLStar, but can be easily expressed in the fragment $\AbarE$ \cite{bmmps16c}.  In the more expressive  setting based on regular expressions, the future branching-time modalities $A$ and $\overline{B}$ are already sufficient for capturing requirements which cannot be expressed in \CTLStar, such as the following branching-time bounded response property: ``\emph{for each state reachable from the initial one where a request  $\textit{req}$ occurs, there is a computation from this state such that the request is followed by a response $\textit{res}$ within an  \emph{even number} of steps}''. This requirement can be expressed in the  fragment $\ABbar$ as follows: $[A](\textit{req} \rightarrow \hsBt (\textit{req}\cdot (\top\cdot \top)^{*}\cdot \textit{res}))$.

In the rest of the paper, we focus on the fragment $\AAbarBBbarEbar$. Analogous constructions and results can be symmetrically given 
for the fragment $\AAbarEBbarEbar$ as well.

\section{Exponential-size model-trace property for $\AAbarBBbarEbar$}\label{sec:AAbarBBbarEbarTrackProperty}

In this section, we show an \emph{exponential-size model-trace property} for $\AAbarBBbarEbar$, which will be used as the basic step to prove that the MC problem for $\AAbarBBbarEbar$  belongs to $\LINAEXPTIME$.
Fix a Kripke structure $\Ku=(\Prop,S, \Trans,\mu,s_0)$ and a finite set $\SPEC=\{r_1,\ldots,r_H\}$ of (propositional-based)
regular expressions over $\Prop$:
such a property ensures that for each $h\geq 0$ and trace $\rho$ of $\Ku$,
 it is possible to build another trace $\rho'$ of $\Ku$, of bounded exponential length, which is indistinguishable from $\rho$ with respect to the fulfilment of any $\AAbarBBbarEbar$ formula $\varphi$ having atomic formulas in $\SPEC$ and nesting depth of the modality $\hsB$ at most $h$ (written $\depthb(\varphi)\leq h$).
 Formally, $\depthb(\varphi)$ is inductively defined as follows
  $(i)$~$\depthb(r)=0$, for any $\RE$ $r$ over $\mathpzc{AP}$;
        $(ii)$~$\depthb(\neg\psi)=\depthb(\psi)$;
        $(iii)$~$\depthb(\psi\wedge\phi)=\max\{\depthb(\psi),\depthb(\phi)\}$;
        $(iv)$~$\depthb(\hsB\psi)=1+\depthb(\psi)$;
        $(v)$~$\depthb(\hsX\psi)=\depthb(\psi)$, for $X\in\{A, \overline{A}, \overline{B}, \overline{E}\}$.

In order to state the result, we first introduce the notion of \emph{$h$-prefix bisimilarity}   between a pair of traces $\rho$ and $\rho'$ of $\Ku$. As proved by Proposition \ref{prop:fulfillmentPreservingPrefix} below, $h$-prefix bisimilarity   is a sufficient condition for two traces  $\rho$ and $\rho'$ to be indistinguishable with respect to the fulfillment of any $\AAbarBBbarEbar$ formula $\varphi$ over $\SPEC$ with $\depthb(\varphi)\leq h$.
Then, for a given trace $\rho$, we  show how to determine a subset of positions of $\rho$, called the \emph{$h$-prefix sampling} of $\rho$, that allows us to build another trace $\rho'$ having singly exponential length (both in $h$ and $|\SPEC|$, where $|\SPEC|$ is defined  as  $\sum_{r\in\SPEC}|r|$) such that $\rho$ and $\rho'$ are $h$-prefix bisimilar.

For any regular expression $r_\ell$ in $\SPEC$ with $\ell\in [1,H]$, let $\Au_\ell=\tpl{2^{\Prop},Q_\ell,Q_\ell^{0},\Delta_\ell,F_\ell}$ be the \emph{canonical (complete)} \NFA\ accepting $\Lang(r_\ell)$ (recall that
$|Q_\ell|\leq 2|r_\ell|$).
Without loss of generality, we assume that the sets of states of these automata are pairwise disjoint.  

The notion of prefix bisimilarity exploits the notion of \emph{summary}  of a trace $\rho$ of $\Ku$, namely a tuple ``recording'' the initial and final states of $\rho$, and, for each automaton $\Au_\ell$ with $\ell\in [1,H]$, the pairs of states $q,q'\in Q_\ell$  such that some run of $\Au_\ell$ over $\mu(\rho)$ goes from $q$ to $q'$.

\begin{definition}[Summary of a trace]
Let $\rho$ be a trace of $\Ku$ with $|\rho|=n$. The summary $\Summary(\rho)$ of $\rho$ (w.r.t.\ $\SPEC$) is the triple
$(\rho(1),\Pi,\rho(n))$, where $\Pi$ is the set of pairs $(q,q')$ such that there is $\ell\in [1,H]$ so that $q,q'\in Q_\ell$ and there is a run of $\Au_\ell$ from $q$ to $q'$ over $\Lab(\rho)$.
\end{definition}

\noindent Note that the number of summaries is at most $|S|^2\cdot 2^{(2|\SPEC|)^2}$. Evidently, the following holds. 

\begin{proposition}\label{prop:Summaries} Let $h\geq 0$, and $\rho$ and $\rho'$ be two traces of  $\Ku$ such that $\Summary(\rho)=\Summary(\rho')$. Then, for all regular expressions $r\in \SPEC$ and traces $\rho_L$ and $\rho_R$ of $\Ku$ such that $\rho_L \star \rho$ and  $\rho \star \rho_R$ are defined, the following hold:
(1) $\mu(\rho)\in\Lang(r)$ iff $\mu(\rho')\in\Lang(r)$;
(2) $\Summary(\rho_L\star \rho)=\Summary(\rho_L\star \rho')$;
(3) $\Summary(\rho\star \rho_R)=\Summary(\rho'\star \rho_R)$.
\end{proposition}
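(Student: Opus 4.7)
The plan is to derive the three clauses directly from the definitions, using the basic fact that the set $\Pi$ in a summary is precisely the reachability relation of the disjoint union of the canonical automata $\Au_\ell$ on the labeling sequence of the trace. I first note that the hypothesis $\Summary(\rho)=\Summary(\rho')$ forces $\fst(\rho)=\fst(\rho')$ and $\lst(\rho)=\lst(\rho')$, so the $\star$-concatenations $\rho_L\star\rho$ and $\rho_L\star\rho'$ (resp.\ $\rho\star\rho_R$ and $\rho'\star\rho_R$) are simultaneously defined and already share their endpoint components in any new summary.

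For clause~(1), fix $r=r_\ell\in\SPEC$. By definition of the canonical $\NFA$, $\mu(\rho)\in\Lang(r_\ell)=\Lang(\Au_\ell)$ iff there exist $q_0\in Q_\ell^{0}$ and $q_f\in F_\ell$ together with a run of $\Au_\ell$ from $q_0$ to $q_f$ over $\mu(\rho)$. By the assumed disjointness of the state sets $Q_1,\ldots,Q_H$, this is equivalent to $(q_0,q_f)\in\Pi$ for some such pair. Since $\Pi=\Pi'$, the same characterization applies to $\mu(\rho')$, and the equivalence follows.

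Clauses~(2) and~(3) rest on the standard run-decomposition property: a run of $\Au_\ell$ over $w\cdot v$ from $p$ to $p''$ exists iff there is an intermediate state $p'\in Q_\ell$ admitting runs from $p$ to $p'$ over $w$ and from $p'$ to $p''$ over $v$. Applying this to clause~(2) with $\mu(\rho_L\star\rho)=\mu(\rho_L)(1,|\rho_L|-1)\cdot\mu(\rho)$, the $\Pi$-component of $\Summary(\rho_L\star\rho)$ is exactly the relational composition of the reachability relation of the $\Au_\ell$'s over the common prefix $\mu(\rho_L)(1,|\rho_L|-1)$ with $\Pi$; swapping $\Pi$ for the equal $\Pi'$ produces the corresponding component of $\Summary(\rho_L\star\rho')$. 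Together with the already-matching endpoints, this yields equality of summaries. Clause~(3) is symmetric, using $\mu(\rho\star\rho_R)=\mu(\rho)\cdot\mu(\rho_R)(2,|\rho_R|)$ and composition on the right.

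The only delicate point is handling the $\star$-concatenation correctly: since the shared state $\lst(\rho_L)=\fst(\rho)$ contributes its label only once to the concatenated labeling sequence, one must split each run at the position corresponding to that unique occurrence (i.e., after the first $|\rho_L|-1$ letters), so that the second piece is a run over precisely $\mu(\rho)$ and the $\Pi$-entry of the summary of $\rho$ applies verbatim. Once this bookkeeping is fixed, the argument is routine, and no induction or additional machinery is required.
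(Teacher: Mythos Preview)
Your argument is correct. The paper itself does not give a proof of this proposition: it simply prefaces the statement with ``Evidently, the following holds,'' treating it as an immediate consequence of the definition of summary and of standard run-composition for \NFA's. Your write-up supplies exactly the routine verification the paper omits, including the one bookkeeping point worth making explicit (that $\mu(\rho_L\star\rho)$ splits as $\mu(\rho_L)(1,|\rho_L|-1)\cdot\mu(\rho)$ so that the second factor matches the word over which $\Pi$ records reachability).
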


We now introduce the notion of \emph{prefix bisimilarity}   between a pair of traces $\rho$ and $\rho'$ of $\Ku$.

\begin{definition}[Prefix bisimilarity]
Let $h\geq 0$. Two traces $\rho$ and $\rho'$  of $\Ku$  are \emph{$h$-prefix bisimilar} (w.r.t.\ $\SPEC$) if the following conditions inductively hold:
\begin{compactitem}
  \item for $h=0$: $\Summary(\rho)=\Summary(\rho')$;
  \item for $h>0$: $\Summary(\rho)=\Summary(\rho')$  and for each proper prefix $\nu$ of $\rho$ (resp., proper prefix $\nu'$ of $\rho'$), there is
  a proper prefix $\nu'$ of $\rho'$ (resp., proper prefix $\nu$ of $\rho$) such that $\nu$ and $\nu'$ are $(h-1)$-prefix bisimilar.
\end{compactitem}
\end{definition}

\begin{property}
For all $h\geq 0$, $h$-prefix   bisimilarity is an equivalence relation over traces of $\Ku$.
\end{property}

The $h$-prefix  bisimilarity of two traces $\rho$ and $\rho'$ is preserved by right (resp., left) $\star$-concatenation with another trace of $\Ku$. 

\newcounter{prop-invarianceLeftRightPrefix}
\setcounter{prop-invarianceLeftRightPrefix}{\value{proposition}}

\begin{proposition}\label{prop:invarianceLeftRightPrefix} Let $h\geq 0$, and $\rho$ and $\rho'$ be two $h$-prefix  bisimilar traces of  $\Ku$. Then, for all traces $\rho_L$ and $\rho_R$ of $\Ku$ such that $\rho_L \star \rho$ and  $\rho \star \rho_R$ are defined, the following hold:
\newline
(1) $\rho_L\star \rho$ and $\rho_L\star \rho'$ are $h$-prefix   bisimilar; (2) $\rho\star \rho_R$ and $\rho'\star \rho_R$ are $h$-prefix  bisimilar.
\end{proposition}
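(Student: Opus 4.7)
The plan is to prove (1) and (2) simultaneously by induction on $h$, using Proposition~\ref{prop:Summaries} for summary preservation throughout. The base case $h=0$ is immediate: $0$-prefix bisimilarity coincides with summary equality, and parts (2) and (3) of Proposition~\ref{prop:Summaries} directly yield $\Summary(\rho_L \star \rho) = \Summary(\rho_L \star \rho')$ and $\Summary(\rho \star \rho_R) = \Summary(\rho' \star \rho_R)$, as required.

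For the inductive step $h > 0$, I first observe that $h$-prefix bisimilarity entails $0$-prefix bisimilarity of $\rho$ and $\rho'$, so Proposition~\ref{prop:Summaries} still supplies the summary equality demanded by the first clause of the bisimilarity definition applied to the concatenations; only the matching of proper prefixes remains. For (1), I fix a proper prefix $\nu$ of $\rho_L \star \rho$ (the direction for $\rho_L \star \rho'$ being symmetric) and split on $|\nu|$. If $|\nu| < |\rho_L|$, then $\nu$ is a proper prefix of $\rho_L$, hence also a proper prefix of $\rho_L \star \rho'$, and is $(h-1)$-prefix bisimilar to itself by reflexivity. Otherwise $\nu = \rho_L \star \nu_\rho$ for a proper prefix $\nu_\rho$ of $\rho$; by the $h$-prefix bisimilarity of $\rho$ and $\rho'$, pick a proper prefix $\nu'_\rho$ of $\rho'$ that is $(h-1)$-prefix bisimilar to $\nu_\rho$. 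Summary equality gives $\fst(\nu'_\rho) = \fst(\rho') = \fst(\rho) = \lst(\rho_L)$, so $\rho_L \star \nu'_\rho$ is a well-defined proper prefix of $\rho_L \star \rho'$, and the inductive hypothesis for (1) at level $h-1$, applied to $\rho_L$ with the pair $(\nu_\rho, \nu'_\rho)$, delivers the required $(h-1)$-prefix bisimilarity of $\rho_L \star \nu_\rho$ and $\rho_L \star \nu'_\rho$.

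For (2), I would mirror the case split on a proper prefix $\nu$ of $\rho \star \rho_R$. If $|\nu| < |\rho|$, then $\nu$ is a proper prefix of $\rho$ and the $h$-prefix bisimilarity of $\rho$ and $\rho'$ directly yields an $(h-1)$-prefix bisimilar proper prefix $\nu'$ of $\rho'$, which is automatically a proper prefix of $\rho' \star \rho_R$. Otherwise $\nu = \rho \star \nu_R$ for some non-empty prefix $\nu_R$ of $\rho_R$ with $|\nu_R| \leq |\rho_R|-1$; applying the inductive hypothesis for (2) at level $h-1$ to $(\rho, \rho', \nu_R)$ (noting that $\rho$ and $\rho'$ are in particular $(h-1)$-prefix bisimilar) yields that $\rho \star \nu_R$ and $\rho' \star \nu_R$ are $(h-1)$-prefix bisimilar, so $\nu' = \rho' \star \nu_R$ is the desired witness, and it is a proper prefix of $\rho' \star \rho_R$ since $|\nu_R| < |\rho_R|$.

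The main, though mild, obstacle is the careful bookkeeping of the case analysis: verifying that each candidate $\nu'$ is genuinely a proper prefix of the relevant concatenation, that the $\star$-concatenations are well defined (which relies on summary equality to align $\fst(\rho) = \fst(\rho')$ and $\lst(\rho) = \lst(\rho')$), and checking that degenerate length-$1$ cases are vacuous (if $|\rho|=1$ then only the first subcase of (1) can arise, and similarly for (2)). Once this bookkeeping is in place, the argument is a direct unfolding of the definitions on top of Proposition~\ref{prop:Summaries}.
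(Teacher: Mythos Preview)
Your argument is correct: the induction on $h$, with Proposition~\ref{prop:Summaries} handling the summary clause and the case split on $|\nu|$ handling the prefix-matching clause, is exactly the natural route, and your bookkeeping (well-definedness of $\star$, properness of the matched prefixes, the downward implication from $h$- to $(h-1)$-prefix bisimilarity) is sound. The paper itself omits the proof of this proposition, deferring it to the extended version \cite{preprintGand}, so there is nothing to compare against here; your write-up is what one would expect such a proof to look like.
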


By exploiting Propositions~\ref{prop:Summaries} and~\ref{prop:invarianceLeftRightPrefix},
we can prove that $h$-prefix  bisimilarity preserves the fulfillment of $\AAbarBBbarEbar$ formulas over $\SPEC$ having nesting depth of modality $\hsB$  at most $h$.

\begin{proposition}\label{prop:fulfillmentPreservingPrefix} Let $h\geq 0$, and $\rho$ and $\rho'$ be two $h$-prefix bisimilar traces of $\Ku$. Then, for each $\AAbarBBbarEbar$
formula $\psi$ over $\SPEC$ with $\depthb(\psi)\leq h$, we have
$\Ku,\rho\models\psi$ iff $\Ku,\rho'\models\psi$.
\end{proposition}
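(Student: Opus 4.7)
The plan is to prove the equivalence by structural induction on the formula $\psi$, where $h$ is any value at least $\depthb(\psi)$ and the pair $(\rho,\rho')$ ranges over $h$-prefix bisimilar traces. This framing is convenient because the only clause that forces a decrement of $h$ is the $\hsB$ case, which simultaneously decrements the $B$-nesting depth; every other clause either preserves $\depthb$ (Boolean and non-$\hsB$ modal cases) or is a base case.

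For the base case $\psi = r$ with $r \in \SPEC$, we have $\depthb(r)=0$. The definition of $h$-prefix bisimilarity (for any $h\geq 0$) includes $\Summary(\rho)=\Summary(\rho')$, so Proposition~\ref{prop:Summaries}(1) directly yields $\mu(\rho)\in\Lang(r)$ iff $\mu(\rho')\in\Lang(r)$. The Boolean cases $\neg\psi$ and $\psi\wedge\phi$ are immediate from the induction hypothesis since $\depthb$ is preserved by negation and equals the maximum over conjuncts. For $\psi=\hsA\varphi$, from $\Summary(\rho)=\Summary(\rho')$ we obtain $\lst(\rho)=\lst(\rho')$, so any trace $\sigma$ witnessing $\Ku,\rho\models\hsA\varphi$ via $\fst(\sigma)=\lst(\rho)$ equally witnesses $\Ku,\rho'\models\hsA\varphi$; symmetrically for $\hsAt$ using $\fst(\rho)=\fst(\rho')$.

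The two cases requiring the star-concatenation machinery are $\hsBt$ and $\hsEt$. Suppose $\Ku,\rho\models\hsBt\varphi$, witnessed by a trace $\sigma$ with $\rho\in\Pref(\sigma)$. Decompose $\sigma=\rho\star\rho_R$ where $\rho_R=\sigma(|\rho|,|\sigma|)$; note $\fst(\rho_R)=\lst(\rho)=\lst(\rho')$, so $\rho'\star\rho_R$ is a well-defined trace of $\Ku$. By Proposition~\ref{prop:invarianceLeftRightPrefix}(2), the traces $\rho\star\rho_R$ and $\rho'\star\rho_R$ are $h$-prefix bisimilar, and since $\depthb(\varphi)\leq h$, the induction hypothesis gives $\Ku,\rho'\star\rho_R\models\varphi$, hence $\Ku,\rho'\models\hsBt\varphi$. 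The $\hsEt$ case is completely analogous, using $\fst(\rho)=\fst(\rho')$ and Proposition~\ref{prop:invarianceLeftRightPrefix}(1) with a left star-concatenation $\rho_L\star\rho$.

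The most delicate case, and the one where the inductive structure of prefix bisimilarity is genuinely used, is $\psi=\hsB\varphi$. Here $\depthb(\psi)=1+\depthb(\varphi)\leq h$, so $h\geq 1$ and $\depthb(\varphi)\leq h-1$. If $\Ku,\rho\models\hsB\varphi$ via a proper prefix $\nu$ of $\rho$, the $h$-prefix bisimilarity of $\rho$ and $\rho'$ supplies a proper prefix $\nu'$ of $\rho'$ that is $(h-1)$-prefix bisimilar to $\nu$; the induction hypothesis applied at the lower parameter $h-1$ (legitimate since $\depthb(\varphi)\leq h-1$) yields $\Ku,\nu'\models\varphi$, and thus $\Ku,\rho'\models\hsB\varphi$. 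The reverse direction is symmetric. The main subtlety of the whole proof lies precisely here: one must set up the induction so that the statement is universally quantified over $h$ (for each formula $\psi$, it holds for every $h\geq\depthb(\psi)$ and every pair of $h$-prefix bisimilar traces), because the inductive call in the $\hsB$ case lands at parameter $h-1$ with the strictly smaller subformula $\varphi$, and both decrements must be available simultaneously. Once this is done, the remaining cases are routine applications of Propositions~\ref{prop:Summaries} and~\ref{prop:invarianceLeftRightPrefix}.
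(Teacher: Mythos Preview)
Your proof is correct and follows essentially the same approach as the paper: structural induction on $\psi$, with the base case handled by Proposition~\ref{prop:Summaries}, the $\hsA/\hsAt$ cases via equality of endpoints, the $\hsBt/\hsEt$ cases via Proposition~\ref{prop:invarianceLeftRightPrefix}, and the $\hsB$ case via the back-and-forth condition in the definition of $h$-prefix bisimilarity at level $h-1$. Your explicit remark that the induction must be universally quantified over $h\geq\depthb(\psi)$ makes precise what the paper phrases more loosely as a ``nested induction on the structure of the formula $\psi$ and on the nesting depth $\depthb(\psi)$''.
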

\begin{proof}
We prove the proposition by a nested induction on the structure of the formula $\psi$ and on the nesting depth $\depthb(\psi)$. For the base case, $\psi$ is a regular expression in $\SPEC$. Since $\Summary(\rho)=\Summary(\rho')$ ($\rho$ and $\rho'$ are $h$-prefix bisimilar) the result follows  by Proposition~\ref{prop:Summaries}.
Now, let us consider the inductive case.
The cases where the root modality of $\psi$ is a Boolean connective directly follow by the inductive hypothesis.
As for the cases where the root modality is either $\hsA$ or $\hsAt$, the result follows from the fact that, being $\rho$ and $\rho'$ $h$-prefix bisimilar,
$\fst(\rho)=\fst(\rho')$ and $\lst(\rho)=\lst(\rho')$.
It remains to consider the cases where the root modality is in $\{\hsB,\hsBt,\hsEt\}$. We prove the implication
$\Ku,\rho\models \psi \Rightarrow \Ku,\rho'\models \psi$ (the converse implication being similar). 
Let
$\Ku,\rho\models \psi$. 
\begin{compactitem}
  \item $\psi=\hsB\varphi$: since $0<\depthb(\psi)\leq h$,  it holds that $h>0$.  Since $\Ku,\rho\models\hsB\varphi$,   there is a proper prefix $\nu$ of $\rho$  such that $\Ku,\nu\models\varphi$.
  Since $\rho$ and $\rho'$ are $h$-prefix bisimilar, there is a proper prefix $\nu'$ of $\rho'$ such that $\nu$ and $\nu'$ are $(h-1)$-prefix bisimilar.
  Being $\depthb(\varphi)\leq h-1$, by the inductive hypothesis we obtain that  $\Ku,\nu'\models\varphi$. Hence, $\Ku,\rho'\models\hsB\varphi$: the thesis follows.
  \item $\psi=\hsBt\varphi$: since $\Ku,\rho\models\hsBt\varphi$, there is a trace $\rho_R$ such that $|\rho_R|>1$ and
$\Ku,\rho\star\rho_R\models\varphi$. By Proposition~\ref{prop:invarianceLeftRightPrefix}, $\rho\star\rho_R$ and $\rho'\star\rho_R$ are $h$-prefix bisimilar. By the inductive hypothesis on  the structure of the formula, we obtain that $\Ku,\rho'\star\rho_R\models\varphi$,  hence,
  $\Ku,\rho'\models\hsBt\varphi$. 
  \item    $\psi=\hsEt\varphi$: this case is similar to the previous one.\qedhere 
\end{compactitem}
\end{proof}

In the following, we show how a trace $\rho$, whose length exceeds a suitable exponential bound---precisely, $(|S|\cdot 2^{(2|\SPEC|)^2})^{h+2}$---can be contracted preserving $h$-prefix bisimilarity and, consequently, the fulfillment of formulas $\varphi$ with $\depthb(\varphi) \leq h$. The basic contraction step of $\rho$ is performed by choosing a subset of $\rho$-positions called $h$-\emph{prefix sampling} ($\PrefS_h$). A contraction can be performed whenever there are two positions $\ell < \ell'$ satisfying $\Summary(\rho(1,\ell))=\Summary(\rho(1,\ell'))$ in between two consecutive positions in the linear ordering of $\PrefS_h$. We  prove that  by taking the contraction $\rho'=\rho(1,\ell)\cdot \rho(\ell'+1,|\rho|)$, we obtain a trace of $\Ku$ which is $h$-prefix bisimilar to $\rho$. The basic contraction step can then be iterated over $\rho'$ until the length bound is reached.

The notion of $h$-prefix sampling is inductively defined using the notion of \emph{prefix-skeleton sampling}.
%
%
For a set $I$ of natural numbers, by ``two consecutive elements of $I$'' we refer to a pair of elements $i,j\in I$ such that $i<j$ and $I\cap [i,j]=\{i,j\}$.

\begin{definition}[Prefix-skeleton sampling]\label{def:skeleton}  Let $\rho$ be a trace of $\Ku$. Given two $\rho$-positions $i$ and $j$, with $i\leq j$, the \emph{prefix-skeleton sampling of $\rho$ in the interval $
[i,j]$} is the \emph{minimal} set $\Pos \supseteq \{i,j\}$ of $\rho$-positions in the interval $[i,j]$ satisfying the condition:
\begin{compactitem}
\item  for each $k\in [i+1,j-1]$,  the minimal position $k'\in [i+1,j-1]$ such that $\Summary(\rho(1,k'))=\Summary(\rho(1,k))$ is in $\Pos$.
\end{compactitem}
\end{definition}

It immediately follows from Definition \ref{def:skeleton} that the prefix-skeleton sampling $\Pos$  of (any) trace $\rho$ in an interval  $[i,j]$ of $\rho$-positions is such that $|\Pos|\leq (|S|\cdot 2^{(2|\SPEC|)^2})+2$. 

\begin{definition}[$h$-prefix sampling] Let $h\geq 0$.
The \emph{$h$-prefix sampling of a trace $\rho$} of $\Ku$ is the \emph{minimal} set $\PrefS_h$ of $\rho$-positions inductively satisfying the following conditions:
\begin{compactitem}
  \item Base case: $h=0$. $\PrefS_0=\{1,|\rho|\}$;
  \item Inductive step: $h> 0$.
  $(i)$  $\PrefS_h\supseteq\PrefS_{h-1}$ and
  $(ii)$ for all pairs of consecutive positions $i,j$ in $\PrefS_{h-1}$,
  the prefix-skeleton sampling of $\rho$ in the interval $[i,j]$ is in $\PrefS_h$.
\end{compactitem}

Let $i_1<\ldots<i_N$ be the ordered sequence of positions in $\PrefS_h$ (note that $i_1=1$ and $i_N=|\rho|$). The \emph{$h$-sampling word of $\rho$} is the sequence of summaries 
$\Summary(\rho(1,i_1))\cdots \Summary(\rho(1,i_N))$.
\end{definition}

The following upper bound to the cardinality of prefix samplings holds.

\begin{property}\label{property:prefSamBound}
The $h$-prefix sampling $\PrefS_h$ of a trace $\rho$ of $\Ku$ is such that $|\PrefS_h|\leq (|S|\cdot 2^{(2|\SPEC|)^2})^{h+1}$.
\end{property}

The following lemma 
states that, for two traces, the property of having the same $h$-sampling word is a sufficient condition to be $h$-prefix bisimilar.

\begin{lemma}\label{lemma:prefixSamplingOne} For $h\geq 0$, two traces
having the same
$h$-sampling word are $h$-prefix bisimilar.
\end{lemma}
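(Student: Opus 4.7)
The plan is to prove the lemma by induction on $h$.

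\textbf{Base case} ($h=0$): The 0-prefix sampling of any trace is $\{1,|\rho|\}$, so the 0-sampling word is $\Summary(\rho(1,1))\cdot\Summary(\rho)$. Two traces sharing the same 0-sampling word thus agree in particular on $\Summary(\rho)=\Summary(\rho')$, which is precisely the definition of 0-prefix bisimilarity.

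\textbf{Inductive step} ($h\geq 1$): Assume the claim at $h-1$, and let $\rho,\rho'$ have the same $h$-sampling word. The condition $\Summary(\rho)=\Summary(\rho')$ is immediate from the last summary in the common word, so the first clause of $h$-prefix bisimilarity is satisfied. To discharge the second clause it suffices, by the inductive hypothesis, to prove the following matching claim (a symmetric statement for $\rho'$ being analogous): \emph{for every $k\in[1,|\rho|-1]$ there exists $k'\in[1,|\rho'|-1]$ such that the prefixes $\rho(1,k)$ and $\rho'(1,k')$ share the same $(h-1)$-sampling word}.

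\textbf{Construction of $k'$}: Enumerate $\PrefS_h^\rho=\{i_1<\cdots<i_N\}$ and $\PrefS_h^{\rho'}=\{i'_1<\cdots<i'_N\}$; by hypothesis $\Summary(\rho(1,i_n))=\Summary(\rho'(1,i'_n))$ for all $n$. Given $k$, let $[p_m,p_{m+1}]$ be the interval of consecutive positions of $\PrefS_{h-1}^\rho$ containing $k$. By the prefix-skeleton sampling clause in the definition of $\PrefS_h$, either $k\in\PrefS_h^\rho$, or there is a (unique) minimal $\hat k\in(p_m,p_{m+1})$ with $\Summary(\rho(1,\hat k))=\Summary(\rho(1,k))$, and that $\hat k$ is in $\PrefS_h^\rho$. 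Taking $\hat k:=k$ in the first case, let $n$ be the index of $\hat k$ in $\PrefS_h^\rho$, and set $k':=i'_n$. Note that $n<N$, since $\hat k<p_{m+1}\leq p_M=|\rho|$ forces $\hat k<|\rho|$, hence $i'_n<|\rho'|$ by matching.

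\textbf{Main obstacle}: The crux is verifying that $\rho(1,k)$ and $\rho'(1,k')$ do share the same $(h-1)$-sampling word. This requires two intermediate claims whose proof is the technical heart of the argument. First, $\rho(1,k)$ and $\rho(1,\hat k)$ share the same $(h-1)$-sampling word: every intermediate position strictly between $p_m$ and $p_{m+1}$ in $\rho$ carries a prefix summary that already occurs at a position $\leq p_m$ along the nested $\PrefS_{j}^\rho$ hierarchy for $j<h-1$ (else it would have been selected into some $\PrefS_j$ by minimality, contradicting $p_m,p_{m+1}$ being consecutive in $\PrefS_{h-1}^\rho$), so the skeleton refinements executed when building $\PrefS_{h-1}^{\rho(1,k)}$ and $\PrefS_{h-1}^{\rho(1,\hat k)}$ agree entry-by-entry up to the common endpoint summary. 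Second, $\rho(1,\hat k)$ and $\rho'(1,k')$ share the same $(h-1)$-sampling word, obtained by running the same structural argument in parallel on the two traces and using the hypothesis that $\rho$ and $\rho'$ have identical $h$-sampling words to align summaries at every refinement level up to $h-1$. Once both claims are in place, the inductive hypothesis applied at $h-1$ yields $(h-1)$-prefix bisimilarity of $\nu=\rho(1,k)$ and $\nu'=\rho'(1,k')$, completing the step.
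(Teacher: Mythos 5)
Your overall strategy is the natural one for this lemma: induct on $h$, reduce the matching condition in the definition of $h$-prefix bisimilarity (via the inductive hypothesis) to showing that each proper prefix $\rho(1,k)$ has a partner $\rho'(1,k')$ with the same $(h-1)$-sampling word, and obtain $k'$ by first passing to the representative $\hat k\in\PrefS_h^\rho$ of $k$ and then to the position $i'_n$ of $\PrefS_h^{\rho'}$ with the same index. Both of your intermediate claims are in fact true. The problem is that the part you yourself call ``the technical heart'' is where the proof actually lives, and the justification you sketch for the first claim is based on a property that is too weak to yield the conclusion. You invoke: \emph{every position in $(p_m,p_{m+1})$ carries a summary already realized at some position $\leq p_m$}. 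Now follow your own construction level by level: at level $j$ the samplings $\PrefS_j^{\rho(1,k)}$ and $\PrefS_j^{\rho(1,\hat k)}$ can only differ on their last consecutive pairs, $[m_j,k]$ versus $[m_j,\hat k]$, where $m_j$ is the largest sampled position below $k$ (resp.\ $\hat k$). For the prefix-skeleton samplings of these two intervals to coincide, every summary realized at some position of $[\hat k,k-1]$ must recur in $(m_j,\hat k)$. Knowing only that it recurs ``somewhere $\leq p_m$'' does not give this: the recurrence could sit at a position $\leq m_j$, in which case the skeleton of $[m_j,k]$ acquires a first occurrence inside $[\hat k,k-1]$ that the skeleton of $[m_j,\hat k]$ lacks, and the two $(h-1)$-sampling words have different lengths. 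What is actually needed (and is true) is the sharper fact that every summary realized in $(p_m,p_{m+1})$ recurs in $(a,p_m]$, where $a$ is the largest position of $\PrefS_{h-2}^\rho$ not exceeding $p_m$ --- this follows by applying consecutiveness of $p_m,p_{m+1}$ in $\PrefS_{h-1}^\rho$ to the prefix-skeleton sampling of the level-$(h-2)$ interval covering them --- together with an inner induction on $j$ establishing the invariant $\PrefS_j^{\rho(1,k)}=(\PrefS_j^\rho\cap[1,k-1])\cup\{k\}$, which in particular guarantees $m_j\leq a$ for all $j\leq h-2$. None of this is in your write-up, and the conclusion ``the skeleton refinements agree entry-by-entry'' is asserted rather than derived.

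The second claim has a further unproved ingredient: you need the index correspondence $i_n\leftrightarrow i'_n$ to respect the whole hierarchy, i.e., $i_n\in\PrefS_j^\rho$ iff $i'_n\in\PrefS_j^{\rho'}$ for every $j\leq h$; otherwise ``running the same structural argument in parallel'' and ``aligning summaries at every refinement level'' is not licensed, since equality of the $h$-sampling \emph{words} does not a priori tell you which sampled positions were contributed at which level. This hierarchy-recovery fact is true (a sampled position belongs to $\PrefS_{j+1}$ iff no earlier sampled position in the same level-$j$ block carries the same summary, a condition readable off the sampling word by induction on $j$), but it is a lemma that must be stated and proved, not an immediate consequence of the hypothesis. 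So: right architecture, true intermediate claims, but the two places where the argument could actually fail are exactly the places where you wave your hands, and the specific reason you give for Claim 1 would not survive being made precise.
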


By exploiting the sufficient condition of Lemma~\ref{lemma:prefixSamplingOne}, we  can finally state the exponential-size model-trace property for $\AAbarBBbarEbar$. In the proof of Theorem \ref{theorem:singleExpTrackModel} below, it is shown  how to derive, from any trace $\rho$ of $\Ku$, an $h$-prefix  bisimilar trace $\rho'$ \emph{induced by} $\rho$ (in the sense that $\rho'$ is obtained by contracting $\rho$, i.e., by concatenating  subtraces of $\rho$ in an ordered way) such that $|\rho'|\leq (|S|\cdot 2^{(2|\SPEC|)^2})^{h+2}$. By Proposition~\ref{prop:fulfillmentPreservingPrefix}, $\rho'$ is indistinguishable from $\rho$ w.r.t.\ the fulfilment of any $\AAbarBBbarEbar$ formula $\varphi$ over the set of atomic formulas in $\SPEC$ such that $\depthb(\varphi)\leq h$.
We preliminarily define the notion of \emph{induced trace} (note that if $\pi$ is induced by $\rho$, then $\fst(\pi)=\fst(\rho)$, $\lst(\pi)=\lst(\rho)$, $|\pi|\leq |\rho|$, and $|\pi| = |\rho|$ iff $\pi = \rho$).

\begin{definition}[Induced trace] \label{definition:inducedTrk}
Let $\rho$ be a trace of $\Ku$ of length $n$. A \emph{trace induced by $\rho$} is a trace $\pi$ of $\Ku$ such that there exists an increasing sequence of $\rho$-positions $i_1<\ldots < i_k$, with $i_1=1$, $i_k=n$, and
$\pi= \rho(i_1)\cdots \rho(i_k)$.
\end{definition}


\begin{theorem}[Exponential-size model-trace property for $\AAbarBBbarEbar$]\label{theorem:singleExpTrackModel}
Let $\rho$ be a trace of $\Ku$ and $h\geq 0$.
Then there exists a trace $\rho'$ induced by $\rho$, whose length is at most $(|S|\cdot 2^{(2|\SPEC|)^2})^{h+2}$, which is $h$-prefix bisimilar to $\rho$. In particular,  for every $\AAbarBBbarEbar$ formula $\psi$ with atomic formulas in $\SPEC$ and such that $\depthb(\psi)\leq h$, it holds that $\Ku,\rho\models \psi$ iff $\Ku,\rho'\models \psi$.
\end{theorem}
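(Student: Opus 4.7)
The plan is to iteratively contract $\rho$ into an induced trace of bounded length having the \emph{same $h$-sampling word} as $\rho$; by Lemma~\ref{lemma:prefixSamplingOne} the two traces will then be $h$-prefix bisimilar, whence Proposition~\ref{prop:fulfillmentPreservingPrefix} yields the preservation of every $\AAbarBBbarEbar$ formula of $\depthb$ at most $h$.

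Starting from the current trace $\sigma$ (initially $\sigma=\rho$), I would compute its $h$-prefix sampling $\PrefS_h(\sigma)=\{i_1<\cdots<i_N\}$, whose cardinality is bounded by $(|S|\cdot 2^{(2|\SPEC|)^2})^{h+1}$ by Property~\ref{property:prefSamBound}. I would then search for a \emph{contractible pair}, i.e., two positions $\ell<\ell'$ lying strictly between some consecutive pair $i_j,i_{j+1}$ of $\PrefS_h(\sigma)$ with $\Summary(\sigma(1,\ell))=\Summary(\sigma(1,\ell'))$. When such a pair exists, I set
\[
\sigma' \;:=\; \sigma(1,\ell)\cdot \sigma(\ell'+1,|\sigma|).
\]
This is a legal trace of $\Ku$ because the equality of summaries forces $\sigma(\ell)=\sigma(\ell')$, so the transition $(\sigma(\ell),\sigma(\ell'+1))\in\Trans$ survives; moreover $\sigma'$ is induced by $\sigma$, and hence transitively by $\rho$ in the sense of Definition~\ref{definition:inducedTrk}.

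The heart of the argument is to show that each such contraction preserves the $h$-sampling word. On the one hand, for every surviving position $k$ of $\sigma'$ the prefix summary $\Summary(\sigma'(1,k))$ coincides with the summary of the corresponding prefix of $\sigma$: this is immediate at positions $\leq \ell$ and follows from Proposition~\ref{prop:Summaries}(3) at positions $>\ell$, since the deleted block begins and ends with prefixes of equal summary and we $\star$-concatenate with the common suffix. On the other hand, I would prove by induction on $h$ that $\PrefS_h(\sigma')$ is exactly the renumbered image of $\PrefS_h(\sigma)$: no position of $\PrefS_h(\sigma)$ is deleted (the contraction lies strictly inside an open gap of $\PrefS_h(\sigma)$, so $i_j$, $i_{j+1}$, and all other sampling positions survive), and no surviving position is newly promoted to the sampling, because any discarded position $k\in\{\ell+1,\dots,\ell'-1\}$ was not a first-occurrence of $\Summary(\sigma(1,k))$ within its enclosing prefix-skeleton interval, so removing it cannot create a new minimal occurrence for any summary at any nesting level.

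Each contraction strictly shortens $\sigma$, so the process terminates. Once no contraction is available, every gap $(i_j,i_{j+1})$ of $\PrefS_h$ contains only positions with pairwise distinct prefix summaries; since all such summaries share the first state $\fst(\rho)$, there are at most $|S|\cdot 2^{(2|\SPEC|)^2}$ of them per gap, and combining with the bound on $|\PrefS_h|$ yields the required length $(|S|\cdot 2^{(2|\SPEC|)^2})^{h+2}$. The main obstacle will be the inductive verification that $\PrefS_h$ is preserved by each contraction step: one must carefully track how the nested prefix-skeleton samplings at levels $0,1,\dots,h$ react to the removal of a summary-redundant block, to ensure that neither deletions nor spurious additions alter the sampling word.
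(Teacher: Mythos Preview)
Your approach is essentially the paper's: both iterate a contraction that removes a block between positions $\ell<\ell'$ with equal prefix summaries lying strictly inside a gap of $\PrefS_h$, argue that the $h$-sampling word is preserved, and conclude via Lemma~\ref{lemma:prefixSamplingOne} and Proposition~\ref{prop:fulfillmentPreservingPrefix}. The paper is in fact terser---it simply asserts that preservation of the $h$-sampling word ``easily follows'' from the choice of $\ell,\ell'$---whereas you spell out more of the justification and correctly flag the level-by-level inductive check on the nested prefix-skeleton samplings as the point requiring care.
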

\begin{proof} We show that if $|\rho|>(|S|\cdot 2^{(2|\SPEC|)^2})^{h+2}$, then there exists a trace $\rho'$ induced by $\rho$ such that $|\rho'|<|\rho|$
 and $\rho$ and $\rho'$ have the same $h$-sampling word. Hence, by iterating the reasoning and applying
 Proposition~\ref{prop:fulfillmentPreservingPrefix} and Lemma~\ref{lemma:prefixSamplingOne}, the thesis follows.
 
 Assume that $|\rho|>(|S|\cdot 2^{(2|\SPEC|)^2})^{h+2}$.
 Let $\PrefS_{h}: 1=i_1<\ldots <i_N=|\rho|$ be the $h$-prefix sampling of $\rho$. By Property~\ref{property:prefSamBound}, $|\PrefS_{h}|\leq (|S|\cdot 2^{(2|\SPEC|)^2})^{h+1}$.
  Since  the number of distinct summaries (w.r.t.\ $\SPEC$) associated with the prefixes of $\rho$ is at most $|S|\cdot 2^{(2|\SPEC|)^2}$, there must be two consecutive positions $i_j$ and
  $i_{j+1}$ in $\PrefS_h$ such that for some $\ell,\ell'\in [i_j+1,i_{j+1}-1]$ with $\ell<\ell'$, $\Summary(\rho(1,\ell))=\Summary(\rho(1,\ell'))$. It easily follows that
  the sequence $\rho'$ given by $\rho':=\rho(1,\ell)\cdot \rho(\ell'+1,|\rho|)$ is a trace induced by $\rho$ such that $|\rho'|<|\rho|$ and $\rho$ and $\rho'$ have the same
  $h$-sampling word.
\end{proof}

\section{$\LINAEXPTIME$-membership of MC for $\AAbarBBbarEbar$}\label{sec:UpperBound}

 In this section, we  exploit the exponential-size model-trace property of  $\AAbarBBbarEbar$ to design a MC algorithm for $\AAbarBBbarEbar$  belonging to the class $\LINAEXPTIME$, namely, the class of problems decidable by  singly exponential-time bounded Alternating Turing Machines (ATMs, for short) with a polynomial-bounded number of alternations. More formally, an ATM $\M$ 
 (we refer to \cite{CKS81} or \cite{preprintGand} for standard syntax and semantics of ATMs)  
 is \emph{singly exponential-time bounded} if there is an integer constant $c\geq 1$ such that for each input $\alpha$, any computation starting on  $\alpha$
 halts after at most $2^{|\alpha|^{c}}$ steps. The ATM $\M$ has a  \emph{polynomial-bounded number of alternations} if there is an integer constant $c\geq 1$ such that, for all inputs $\alpha$ and computations $\pi$ starting from $\alpha$, the number of alternations of existential and universal configurations along $\pi$ is at most $|\alpha|^{c}$.

In the sequel, we assume that 
$\AAbarBBbarEbar$  formulas are in \PNF. For a  formula $\varphi$,
let $\SPEC$ be the set of regular expressions occurring in $\varphi$. 
The size $|\varphi |$ of $\varphi$ is given by the number of non-atomic subformulas of $\varphi$, plus $|\SPEC |$.
As another complexity measure of an $\AAbarBBbarEbar$ formula $\varphi$, we consider 
the standard  \emph{alternation depth}, denoted by $\AltN(\varphi)$, between the existential $\tuple{X}$  and universal
modalities $[X]$ (and vice versa) occurring in the \PNF\ of $\varphi$, for $X\in \{\overline{B},\overline{E}\}$. Note that the definition does not consider the modalities associated with the Allen's relations in $\{A,\overline{A},B\}$. 
Moreover, let $\FMC$ be the set of pairs $(\Ku,\varphi)$ consisting of a Kripke structure $\Ku$ and an $\AAbarBBbarEbar$ formula $\varphi$ such that $\Ku\models\varphi$. The complexity upper bound is as follows.


\begin{theorem}\label{Theorem:UpperBoundAAbrBBarEbar} One can construct a singly exponential-time bounded ATM accepting $\FMC$ whose number of alternations on an input $(\Ku,\varphi)$ is at most $\AltN(\varphi)+2$.
\end{theorem}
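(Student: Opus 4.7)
The plan is to construct a singly exponential-time bounded ATM $\M$ that, given $(\Ku,\varphi)$, accepts iff $\Ku \models \varphi$, using at most $\AltN(\varphi) + 2$ alternations. The pivotal tool is the exponential-size model-trace property (Theorem~\ref{theorem:singleExpTrackModel}): for any subformula $\psi$ of $\varphi$ with $\depthb(\psi)\leq h$, every trace of $\Ku$ of length greater than $N_h := (|S|\cdot 2^{(2|\SPEC|)^2})^{h+2}$ can be contracted to an equivalent one of length at most $N_h$. Since $N_h$ is singly exponential in $|\Ku|$ and $|\varphi|$, every trace the ATM ever needs to reason about fits on its work tape in exponential space, and its contents can be verified against the transition relation of $\Ku$ and against the canonical NFAs $\Au_r$ for $r\in\SPEC$ in polynomial time in the trace length.

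First, $\M$ enters a universal state that guesses, letter by letter, an initial trace $\rho_0$ of $\Ku$ of length at most $N_{\depthb(\varphi)}$; this is the first alternation block, discharging the outer universal quantification ``for all initial traces $\rho_0$'' in the semantics of $\Ku \models \varphi$. The ATM then switches to an existential state to launch the recursive formula evaluation on $\rho_0$, yielding the second block. The recursive procedure descends $\varphi$ on the current trace $\pi$ by cases on the outermost operator: an atomic regular expression $r \in \SPEC$ is checked deterministically by simulating $\Au_r$ on $\mu(\pi)$; Boolean operators in \PNF\ are handled by standard branching; $\hsB,\hsA,\hsAt$ existentially guess a prefix of $\pi$ or a bounded-length trace meeting $\pi$ at an endpoint; $\hsBt,\hsEt$ existentially guess a bounded-length right or left extension of $\pi$, whose length bound is supplied by Theorem~\ref{theorem:singleExpTrackModel} applied to the subformula; and the dual universal modalities perform universal branching over the same witness spaces.

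The key to bounding the number of alternations is that $\hsBt,\hsEt$ and their duals $[\overline{B}],[\overline{E}]$ are the only modalities that genuinely force a change of quantifier block, whereas the remaining operators can be absorbed into the current block. To make this precise, $\M$ is organised so that, within a single quantifier block, all the existential witnesses required by $\hsB,\hsA,\hsAt$ (respectively, the universal selections required by $[B],[A],[\overline{A}]$) are guessed and verified together with the block's primary nondeterministic choices. Similarly, Boolean connectives whose quantifier type matches the current block are merged into it without introducing an alternation. Each nesting between $\hsBt$ (respectively, $\hsEt$) and its dual $[\overline{B}]$ (respectively, $[\overline{E}]$) produces exactly one alternation, so the total alternation count beyond the initial two blocks is bounded by $\AltN(\varphi)$, giving the $\AltN(\varphi) + 2$ bound.

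The main obstacle is precisely the ``merging'' of the $\hsA,\hsAt,\hsB$ operators (and of opposite-polarity Boolean subformulas) into the current block: the ATM transitions must be defined so that the guesses and verifications needed for these operators happen within the same phase without prematurely alternating. A clean way is to maintain, in the ATM's state, a set of \emph{pending obligations} (pairs of subformulas and traces still to be verified) and to process them in batches determined by the outermost $\hsBt$/$\hsEt$ operator dominating them in the syntax tree; equivalently, one can pre-process each maximal subformula whose root is neither $\hsBt$, $\hsEt$, nor one of their duals, compressing it into the quantifier block it belongs to. Once this is in place, the time bound follows routinely: each configuration manipulates exponential-size data in polynomial time per step, the recursion has polynomial depth in $|\varphi|$, and the overall computation tree is traversed in singly exponential time.
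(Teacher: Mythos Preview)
Your overall strategy---use Theorem~\ref{theorem:singleExpTrackModel} to bound trace lengths and then recursively evaluate the formula on certificates---is sound and matches the paper. The gap is in the alternation count, specifically in how you handle $\hsA$, $\hsAt$, $[A]$, $[\overline{A}]$.

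You correctly identify merging these modalities into the current quantifier block as ``the main obstacle,'' but the batching/pending-obligations idea you sketch does not solve it. Consider $\psi_k := \hsA[A]\hsA[A]\cdots\hsA[A]\,r$ with $k$ alternating pairs. Since $\psi_k$ contains no $\overline{B}$ or $\overline{E}$ modality, $\AltN(\psi_k)=0$, so the target is $2$ alternations. Yet each $\hsA$ requires an existential guess of a certificate and each $[A]$ a universal one, and the witness for the $i$-th $\hsA$ cannot be chosen until the certificate at the $(i{-}1)$-th $[A]$ has been fixed; your procedure therefore alternates $\Theta(k)$ times regardless of how obligations are batched. Nor can you add all $[A]$-obligations to the pending set (as one legitimately can for $[B]$, whose range is the at most exponentially many prefixes of the current certificate): $[A]$ ranges over \emph{all} certificates from a given state, and there are doubly-exponentially many of those.

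The paper resolves this via a device you do not mention: the truth of $\hsA\theta$ and $[A]\theta$ at $\rho$ depends only on the \emph{state} $\lst(\rho)$ (and dually for $\hsAt$, $[\overline{A}]$ on $\fst(\rho)$). Hence one guesses, up front in a single existential block, an $\AAbar$-labeling $\GLab$ assigning to each of the $|S|$ states a maximally consistent subset of the $\AAbar$-subformulas of $\varphi$, and then verifies it. During the main evaluation every $\AAbar$-modality is discharged by a deterministic lookup in $\GLab$, costing zero alternations; inside the verification of $\GLab$, nested $\AAbar$-modalities are again handled by lookup in the same $\GLab$. The two extra alternations in the stated bound come precisely from this outer guess-and-verify wrapper. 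Without this state-based precomputation (or an equivalent mechanism), the $\AltN(\varphi)+2$ bound cannot be achieved.
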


\begin{figure}
\begin{bigbox}

\noindent $\textit{check}(\Ku,\varphi)$ \quad [\emph{$\Ku$ is a finite Kripke structure and $\varphi$ is an $\AAbarBBbarEbar$ in \PNF}]

\hrulefill

\noindent \texttt{existentially choose} an $\AAbar$-labeling  $\GLab$ for  $(\Ku,\varphi)$;\\
\noindent \texttt{for each} state $s$ and $\psi\in\GLab(s)$ \texttt{do} \\
\noindent \text{\quad} \texttt{case} $\psi=\hsA \psi'$ (resp., $\psi=\hsAt \psi'$):  \texttt{existentially choose} a certificate $\rho$ with \\
\text{\quad} \phantom{\texttt{case}:}  $\fst(\rho)=s$ (resp., $\lst(\rho)=s$) and \texttt{call} $\textit{checkTrue}_{\tpl{\Ku,\varphi,\GLab}}(\{(\psi',\rho)\})$;\\
 \noindent \text{\quad} \texttt{case} $\psi=[A] \psi'$ (resp., $\psi=[\overline{A}] \psi'$):  \texttt{universally choose} a certificate $\rho$ with \\
 \text{\quad}\phantom{\texttt{case}:}  $\fst(\rho)=s$ (resp., $\lst(\rho)=s$) and \texttt{call} $\textit{checkTrue}_{\tpl{\Ku,\varphi,\GLab}}(\{(\psi',\rho)\})$;\\
 \noindent \texttt{end for}\\
 \noindent \texttt{universally choose} a certificate $\rho$ for $(\Ku,\varphi)$ with $\fst(\rho)=s_0$ ($s_0$ is the initial state of $\Ku$)\\
 \text{\quad\quad\quad}  and  \texttt{call} $\textit{checkTrue}_{\tpl{\Ku,\varphi,\GLab}}(\{(\varphi,\rho)\})$;

\end{bigbox}
\vspace{-0.5cm}
\caption{Procedure $\textit{check}$}
\label{fig-proc-check}
\end{figure}

In the rest of the section, we define a procedure (Figure~\ref{fig-proc-check})---which can be easily translated into an ATM---proving the assertion of Theorem~\ref{Theorem:UpperBoundAAbrBBarEbar}. 
We start with some auxiliary notation. Fix a finite Kripke structure $\Ku$ with set of states $S$ and an $\AAbarBBbarEbar$ formula $\varphi$ in \PNF.
Let $h=\depthb(\varphi)$, and
$\SPEC$ be the set of regular expressions occurring in $\varphi$.

A \emph{certificate} of $(\Ku,\varphi)$ is a trace $\rho$ of $\Ku$ whose length is less than $(|S|\cdot 2^{(2|\SPEC|)^2})^{h+2}$ (the bound for the exponential trace property in Theorem~\ref{theorem:singleExpTrackModel}).
A \emph{$\overline{B}$-witness} (resp., \emph{$\overline{E}$-witness}) of a certificate $\rho$ for $(\Ku,\varphi)$ is a certificate $\rho'$ of  $(\Ku,\varphi)$ such that $\rho'$ is $h$-prefix bisimilar to a trace of the form $\rho\star \rho''$ (resp., $\rho''\star \rho$) for some
\emph{certificate} $\rho''$ of $(\Ku,\varphi)$ with $|\rho''|>1$. By $\SD(\varphi)$ we denote the set consisting of the subformulas $\psi$ of $\varphi$ and  the \emph{duals} $\widetilde{\psi}$.
By the results of Section~\ref{sec:AAbarBBbarEbarTrackProperty}, we deduce the following: 

\newcounter{prop-EbarBbarWitness}
\setcounter{prop-EbarBbarWitness}{\value{proposition}}

\begin{proposition}\label{prop:EbarBbarWitness} Let $\Ku$ be a finite Kripke structure,  $\varphi$ be an $\AAbarBBbarEbar$ formula  in \PNF, and $\rho$ be a certificate for $(\Ku,\varphi)$. The following properties hold:
\begin{compactenum}
  \item for each $\tuple{X}\psi\in\SD(\varphi)$ with $X\in \{\overline{B}, \overline{E}\}$, $\Ku,\rho\models \tuple{X}\psi$ iff there exists an $X$-witness $\rho'$ of $\rho$
  for $(\Ku,\varphi)$ such that $\Ku,\rho'\models \psi$;
  \item for each trace of the form $\rho\star\rho'$ (resp., $\rho'\star\rho$) such that $\rho'$ is a certificate for $(\Ku,\varphi)$, one can construct in time singly exponential in the size of $(\Ku,\varphi)$,
  a certificate $\rho''$ which is $h$-prefix bisimilar to $\rho\star\rho'$ (resp., $\rho'\star\rho$), with $h=\depthb(\varphi)$.
\end{compactenum}
\end{proposition}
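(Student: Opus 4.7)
The plan is to prove Part~2 first, since it is invoked in the course of proving Part~1. For Part~2, consider the case $\rho \star \rho'$ (the case $\rho' \star \rho$ is symmetric). Both $\rho$ and $\rho'$ are certificates, so $\sigma := \rho \star \rho'$ has length at most $2\cdot (|S|\cdot 2^{(2|\SPEC|)^2})^{h+2}$, which is singly exponential in $|(\Ku,\varphi)|$. Applying Theorem~\ref{theorem:singleExpTrackModel} to $\sigma$ yields an induced trace $\rho''$ of $\sigma$ of length at most $(|S|\cdot 2^{(2|\SPEC|)^2})^{h+2}$ (hence a certificate) that is $h$-prefix bisimilar to $\sigma$. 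For the time bound, I would use the iterative contraction procedure implicit in the proof of Theorem~\ref{theorem:singleExpTrackModel}: at each round, compute the $h$-prefix sampling of the current trace, locate two indices between two consecutive sampling positions sharing the same summary w.r.t.\ $\SPEC$, and excise the infix between them. Summaries can be computed by simulating the canonical $\NFA$s $\Au_1,\ldots,\Au_H$ in parallel, so each prefix-summary along a trace of exponential length is computable in singly exponential time, and one contraction step is bounded by a singly exponential quantity. Since every round strictly decreases the trace length and we start from a trace of singly exponential length, the whole construction runs in singly exponential time.

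For the ($\Leftarrow$) direction of Part~1, assume $X = \overline{B}$ (the case $X = \overline{E}$ is analogous) and let $\rho'$ be a $\overline{B}$-witness of $\rho$ for $(\Ku,\varphi)$ with $\Ku,\rho' \models \psi$. By the definition of $\overline{B}$-witness, $\rho'$ is $h$-prefix bisimilar to a trace of the form $\rho \star \rho''$ for some certificate $\rho''$ with $|\rho''| > 1$. Since $\tuple{\overline{B}}\psi \in \SD(\varphi)$ and $\depthb(\psi) = \depthb(\tuple{\overline{B}}\psi) \leq \depthb(\varphi) = h$ (observing that duality preserves $\depthb$ on \PNF\ formulas), Proposition~\ref{prop:fulfillmentPreservingPrefix} yields $\Ku,\rho \star \rho'' \models \psi$, and the semantics of $\hsBt$ gives $\Ku,\rho \models \hsBt\psi$.

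For the ($\Rightarrow$) direction, suppose $\Ku,\rho \models \hsBt\psi$. By the semantics, there is a trace $\tau$ of $\Ku$ with $\lst(\rho) = \fst(\tau)$, $|\tau| > 1$, and $\Ku,\rho \star \tau \models \psi$. Since $\tau$ may be arbitrarily long, apply Theorem~\ref{theorem:singleExpTrackModel} to $\tau$, obtaining an induced trace $\tau_0$ of $\tau$ with $|\tau_0| \leq (|S|\cdot 2^{(2|\SPEC|)^2})^{h+2}$ that is $h$-prefix bisimilar to $\tau$. By Definition~\ref{definition:inducedTrk}, $\fst(\tau_0) = \fst(\tau)$, hence $\lst(\rho) = \fst(\tau_0)$, and $|\tau_0| \geq 2$ because $|\tau| > 1$; thus $\tau_0$ is a certificate with $|\tau_0| > 1$ and $\rho \star \tau_0$ is a well-defined trace. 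Proposition~\ref{prop:invarianceLeftRightPrefix} gives that $\rho \star \tau_0$ is $h$-prefix bisimilar to $\rho \star \tau$, so by Proposition~\ref{prop:fulfillmentPreservingPrefix}, $\Ku,\rho \star \tau_0 \models \psi$. Invoking Part~2 on $\rho \star \tau_0$, I obtain a certificate $\rho^{\star}$ that is $h$-prefix bisimilar to $\rho \star \tau_0$; by construction $\rho^{\star}$ is a $\overline{B}$-witness of $\rho$ (with inner witness-trace $\tau_0$), and a final application of Proposition~\ref{prop:fulfillmentPreservingPrefix} yields $\Ku,\rho^{\star} \models \psi$. The case $X = \overline{E}$ is symmetric.

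The main obstacle, in my view, lies in Part~2: it is easy to invoke the exponential-size model-trace property to get the \emph{existence} of $\rho''$, but the explicit \emph{singly exponential construction} requires showing that each iterative contraction step (computing summaries along the trace, computing the $h$-prefix sampling, locating two collapsible positions) is executable within the claimed time bound. Once that is granted, the rest of the proof is a careful combination of $h$-prefix bisimilarity preservation (Propositions~\ref{prop:invarianceLeftRightPrefix} and~\ref{prop:fulfillmentPreservingPrefix}) with the semantic definitions of $\hsBt$ and $\hsEt$.
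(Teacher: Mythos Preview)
Your proof is correct and follows precisely the route the paper intends: the paper merely states that the proposition is deduced ``by the results of Section~\ref{sec:AAbarBBbarEbarTrackProperty}'', and your argument spells out exactly that deduction, combining Theorem~\ref{theorem:singleExpTrackModel} with Propositions~\ref{prop:invarianceLeftRightPrefix} and~\ref{prop:fulfillmentPreservingPrefix} in the natural way. In particular, your two-step treatment of the $(\Rightarrow)$ direction of Part~1---first shrinking the extension $\tau$ to a certificate $\tau_0$ (so that the inner witness-trace in the definition of $\overline{B}$-witness is itself a certificate), and only then invoking Part~2 on $\rho\star\tau_0$---is the right decomposition and is needed to match the definition of $X$-witness.
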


 The set $\AAbar(\varphi)$ is the set of formulas in $\SD(\varphi)$ of the form $\tuple{X}\psi'$ or $[X]\psi'$ with $X\in \{A,\overline{A}\}$.
%
%
%
An $\AAbar$-labeling $\GLab$ for $(\Ku,\varphi)$ is a mapping associating to each state $s$ of $\Ku$ a maximally consistent set of subformulas of $\AAbar(\varphi)$. More precisely, for all $s \in S$, $\GLab(s)$ is such that for all $\psi,\widetilde{\psi}\in \AAbar(\varphi)$, $\GLab(s)\cap \{\psi,\widetilde{\psi}\}$ is a singleton.
 We say that $\GLab$ is \emph{valid} if for all states $s \in S$ ad $\psi\in \GLab(s)$, $\Ku,s\models \psi$ (we consider $s$ as a length-1 trace). 
%
Finally, a \emph{well-formed set for $(\Ku,\varphi)$} is a finite set $\WS$ consisting of pairs $(\psi,\rho)$ such that $\psi\in\SD(\varphi)$ and $\rho$ is a certificate of $(\Ku,\varphi)$.   We say that $\WS$ is \emph{universal}
if each formula occurring in $\WS$ is of the form $[X]\psi$ with $X\in \{\overline{B},\overline{E}\}$.  The \emph{dual} $\widetilde{\WS}$ of $\WS$ is the well-formed set  obtained by replacing each pair $(\psi,\rho)\in \WS$ with
   $(\widetilde{\psi},\rho)$.  A well-formed set $\WS$ is \emph{valid} if for each $(\psi,\rho)\in \WS$,
  $\Ku,\rho\models \psi$.

\begin{figure}
\begin{bigbox}
\noindent $\textit{checkTrue}_{\tpl{\Ku,\varphi,\GLab}}(\WS)$  \quad [\emph{$\WS$ is a well-formed set and $\GLab$ is an $\AAbar$-labeling for $(\Ku,\varphi)$}]

\hrulefill

\noindent  \texttt{while} $\WS$ is \emph{not} universal \texttt{do}\\
\noindent \text{\quad} \texttt{deterministically select} $(\psi,\rho)\in \WS$ such that $\psi$ is not of the form  $[\overline{E}]\psi'$ and $[\overline{B}]\psi'$ \\
\noindent \text{\quad} \texttt{update} $\WS\leftarrow \WS\setminus \{(\psi,\rho)\}$;\\
\noindent \text{\quad} \texttt{case} $\psi=r$  with $r\in \RE$:  \texttt{if} $\rho\notin \Lang(r)$   \texttt{then} \emph{reject the input};\\
\noindent \text{\quad} \texttt{case} $\psi=\neg r$ with $r\in \RE$:  \texttt{if} $\rho\in \Lang(r)$  \texttt{then} \emph{reject the input};\\
\noindent \text{\quad} \texttt{case} $\psi=\hsA \psi'$ or $\psi=[A] \psi'$:  \texttt{if} $\psi\notin\GLab(\lst(\rho))$   \texttt{then} \emph{reject the input};\\
\noindent \text{\quad} \texttt{case} $\psi=\hsAt \psi'$ or $\psi=[\overline{A}] \psi'$:  \texttt{if} $\psi\notin\GLab(\fst(\rho))$   \texttt{then} \emph{reject the input};\\
\noindent \text{\quad} \texttt{case} $\psi=\psi_1\vee \psi_2$: \texttt{existentially choose} $i=1,2$, \texttt{update} $\WS\leftarrow \WS\cup \{(\psi_i,\rho)\}$;\\
\noindent \text{\quad} \texttt{case} $\psi=\psi_1\wedge \psi_2$: \texttt{update} $\WS\leftarrow \WS\cup \{(\psi_1,\rho),(\psi_2,\rho)\}$;\\
\noindent \text{\quad} \texttt{case} $\psi=\hsB\psi'$: \texttt{existentially choose} $\rho'\in \Pref(\rho)$, \texttt{update} $\WS\leftarrow \WS\cup \{(\psi',\rho')\}$;\\
\noindent \text{\quad} \texttt{case} $\psi=[B]\psi'$: \texttt{update} $\WS\leftarrow \WS\cup \{(\psi',\rho')\mid \rho'\in\Pref(\rho)\}$;\\
\noindent \text{\quad} \texttt{case} $\psi=\tuple{X}\psi'$ with $X\in \{\overline{E},\overline{B}\}$: \texttt{existentially choose} an $X$-\emph{witness} $\rho'$ of $\rho$  \\
\noindent \text{\quad} \phantom{\texttt{case} $\psi=\tuple{X}\psi'$ with $X\in \{\overline{E},\overline{B}\}$:} for $(\Ku,\varphi)$, \texttt{update}  $\mathcal{\WS}\leftarrow \mathcal{\WS}\cup \{(\psi',\rho')\}$;\\
 \texttt{end while}\\
 \texttt{if} $\mathcal{\WS}=\emptyset$  \texttt{then} \emph{accept}\\
\texttt{else} \texttt{universally choose} $(\psi,\rho)\in \widetilde{\WS}$ and  \texttt{call} $\textit{checkFalse}_{\tpl{\Ku,\varphi,\GLab}}(\{(\psi,\rho)\})$
\end{bigbox}
\vspace{-0.5cm}
\caption{Procedure $\textit{checkTrue}$}
\label{fig-proc-checkTRUE}
\end{figure}

The procedure $\textit{check}$, reported in Figure~\ref{fig-proc-check}, defines the ATM required to prove the assertion of Theorem~\ref{Theorem:UpperBoundAAbrBBarEbar}.
The procedure $\textit{check}$ takes a pair $(\Ku,\varphi)$ as input and: $(1)$ it guesses
  an $\AAbar$-labeling  $\GLab$ for $(\Ku,\varphi)$; $(2)$ it checks that the guessed labeling $\GLab$ is valid; $(3)$ for every certificate $\rho$ starting from the initial state, it checks that $\Ku,\rho\models \varphi$. To perform steps $(2)$--$(3)$, it exploits the
  auxiliary ATM procedure $\textit{checkTrue}$ reported in Figure~\ref{fig-proc-checkTRUE}. The procedure $\textit{checkTrue}$ takes as input a well-formed set  $\WS$ for $(\Ku,\varphi)$ and, assuming that the current $\AAbar$-labeling $\GLab$ is valid,  checks whether $\WS$ is valid. For each pair $(\psi,\rho) \in \WS$ such that $\psi$ is not of the form $[X]\psi'$, with $X\in \{\overline{B},\overline{E}\}$, $\textit{checkTrue}$ directly checks  whether  $\Ku,\rho\models \psi$. 
  In order to allow a deterministic choice of the current element of the iteration, we assume that the set $\WS$ is implemented as an ordered data structure.
  At each iteration of the while loop in $\textit{checkTrue}$, 
  the current pair  $(\psi,\rho)\in \WS$ is processed
   according to the semantics of $\HS$, exploiting the guessed $\AAbar$-labeling  $\GLab$ and Proposition~\ref{prop:EbarBbarWitness}.
   The processing is either deterministic or based on an existential choice,
   and the currently processed pair $(\psi,\rho)$  is either removed from $\WS$, or replaced with pairs $(\psi',\rho')$ such that $\psi'$ is a strict subformula of $\psi$.
   
   At the end of the while loop, the resulting well formed set $\WS$ is either empty or universal. In the former case, the procedure accepts. In the latter case, 
   %
   %
   there is a switch in the current operation mode. For each element $(\psi,\rho)$ in the dual of $\WS$ (note that the root modality  of $\psi$ is either $\hsEt$ or $\hsBt$), the auxiliary ATM procedure $\textit{checkFalse}$ is invoked, which accepts the input $\{(\psi,\rho)\}$ iff $\Ku,\rho \not\models \psi$.
  The procedure  $\textit{checkFalse}$ is the \lq\lq dual\rq\rq{} of $\textit{checkTrue}$: it is simply obtained from $\textit{checkTrue}$ by switching \emph{accept} and \emph{reject},  by switching existential choices and universal choices, and by converting the last call to $\textit{checkFalse}$ into $\textit{checkTrue}$. 
  Thus $\textit{checkFalse}$  accepts an input $\WS$ iff $\WS$ is \emph{not} valid.
    
Recall that the length of a certificate is singly exponential in the size of the input $(\Ku,\varphi)$.
Thus, since the number of alternations of the ATM $\textit{check}$  between existential  and universal choices is evidently the number of switches between the calls to the procedures $\textit{checkTrue}$ and $\textit{checkFalse}$ plus two, by Theorem~\ref{theorem:singleExpTrackModel} and Proposition~\ref{prop:EbarBbarWitness}, we can state the following result that directly implies Theorem~\ref{Theorem:UpperBoundAAbrBBarEbar}. 

\newcounter{prop-correctnessATMcheck}
\setcounter{prop-correctnessATMcheck}{\value{proposition}}

\begin{proposition}\label{prop:correctnessATMcheck}The ATM  $\textit{check}$ is a singly exponential-time bounded ATM accepting $\FMC$  whose number of alternations on an input $(\Ku,\varphi)$ is at most $\AltN(\varphi)+2$.
\end{proposition}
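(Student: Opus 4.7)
The plan is threefold: I will establish the time bound, the correctness, and the alternation count as three separate claims, all by structural/inductive arguments on $\varphi$ and on the chain of recursive procedure calls.

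First, for the \emph{time bound}, I will show that every computation branch of $\textit{check}$ runs in time singly exponential in $|(\Ku,\varphi)|$. The initial existential guess of an $\AAbar$-labeling $\GLab$ has size polynomial in $|S|\cdot|\varphi|$, and the validation loop and the final universal choice operate on certificates whose length is bounded by $(|S|\cdot 2^{(2|\SPEC|)^2})^{h+2}$ with $h=\depthb(\varphi)$, by Theorem~\ref{theorem:singleExpTrackModel}. Inside $\textit{checkTrue}$ and $\textit{checkFalse}$, I will use as termination measure the multiset of syntactic sizes of the formulas currently in $\WS$: each iteration of the while loop either removes a pair or replaces $(\psi,\rho)$ by pairs $(\psi',\rho')$ in which $\psi'$ is a strict subformula of $\psi$, so the measure strictly decreases and the number of iterations is polynomial in $|\varphi|$ times the maximal size attained by $\WS$. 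The per-iteration work amounts to running a canonical $\NFA$ $\Au_{r_\ell}$ on a certificate (polynomial in $|\SPEC|$ times singly exponential in $|(\Ku,\varphi)|$), or to guessing a $\overline{B}$- or $\overline{E}$-witness which by Proposition~\ref{prop:EbarBbarWitness}(2) can be constructed in singly exponential time. Summation yields a singly exponential bound per branch.

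Second, for \emph{correctness}, I will prove by mutual structural induction on $\varphi$ the following invariant: for every valid $\AAbar$-labeling $\GLab$ and every well-formed set $\WS$ for $(\Ku,\varphi)$, $\textit{checkTrue}_{\tpl{\Ku,\varphi,\GLab}}(\WS)$ accepts iff $\WS$ is valid, and $\textit{checkFalse}_{\tpl{\Ku,\varphi,\GLab}}(\WS)$ accepts iff $\WS$ is not valid. The Boolean-connective cases and the cases for $\hsB$ and $[B]$ follow directly from the state-based semantics of $\HS$. The cases for $\hsA$, $\hsAt$, $[A]$, $[\overline{A}]$ invoke maximal consistency and validity of $\GLab$: since for every $\psi\in\AAbar(\varphi)$ exactly one of $\psi, \widetilde{\psi}$ belongs to $\GLab(s)$ and $\GLab$ is valid, asking whether $\psi\in\GLab(\lst(\rho))$ or $\psi\in\GLab(\fst(\rho))$ correctly decides truth at the relevant endpoint. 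The cases for $\hsBt, \hsEt$ and their duals reduce, by Proposition~\ref{prop:EbarBbarWitness}(1), to the existence of a $\overline{B}$- or $\overline{E}$-witness of certificate length. The transition at the end of the while loop is justified by the fact that, once $\WS$ is universal, $\WS$ is valid iff no pair in $\widetilde{\WS}$ is valid, so the universal choice on $\widetilde{\WS}$ followed by the call to $\textit{checkFalse}$ realizes exactly this semantics. The outer procedure $\textit{check}$ then combines these ingredients: the existential guess of $\GLab$ together with the validation block is sound and complete for producing a valid labeling; and the universal choice over initial certificates $\rho$ at $s_0$ is sound and complete for $\Ku\models\varphi$ because, by Theorem~\ref{theorem:singleExpTrackModel} and Proposition~\ref{prop:fulfillmentPreservingPrefix}, every initial trace of $\Ku$ is $h$-prefix bisimilar to some initial certificate and $h$-prefix bisimilarity preserves satisfaction of $\varphi$.

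Third, for the \emph{alternation count}, I will trace the alternations between existential and universal configurations along any computation path. The outer procedure $\textit{check}$ contributes at most two alternations: an initial existential configuration (guess of $\GLab$) followed by the universal block quantifying over states, formulas in $\GLab(s)$, and over initial certificates $\rho$, possibly with one existential/universal choice of a validation certificate nested inside. Inside the mutually recursive calls, $\textit{checkTrue}$ uses exclusively existential nondeterminism until its while loop terminates with a universal $\WS$, at which point the universal choice on $\widetilde{\WS}$ triggers the switch to $\textit{checkFalse}$; symmetrically, $\textit{checkFalse}$ uses exclusively universal nondeterminism until its switch back to $\textit{checkTrue}$. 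Each such switch corresponds exactly to descending through a $[\overline{B}]$ or $[\overline{E}]$ node (in $\textit{checkTrue}$) or through a $\hsBt$ or $\hsEt$ node (in $\textit{checkFalse}$) in the currently processed subformula, which is precisely one alternation counted by $\AltN(\varphi)$. By induction on the call chain, the number of switches along any computation path is bounded by $\AltN(\varphi)$, giving a grand total of at most $\AltN(\varphi)+2$ alternations.

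The main obstacle I anticipate is the correctness argument for the mutually recursive dualization between $\textit{checkTrue}$ and $\textit{checkFalse}$: one must carefully verify that, when the while loop terminates with a universal set $\WS$, the well-formedness invariant (every pair consists of a formula in $\SD(\varphi)$ and a certificate) is preserved across the switch, that the contraction step from Proposition~\ref{prop:EbarBbarWitness}(2) keeps certificate lengths within the bound of Theorem~\ref{theorem:singleExpTrackModel} throughout all recursive calls, and that the guessed $\AAbar$-labeling $\GLab$ remains in scope and valid across every dualization, so that the induction on formula structure composes cleanly with the alternation of operating modes.
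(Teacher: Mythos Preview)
Your proposal is correct and follows essentially the same approach as the paper, which only sketches the argument by pointing to Theorem~\ref{theorem:singleExpTrackModel} and Proposition~\ref{prop:EbarBbarWitness} and observing that the number of alternations equals the number of switches between $\textit{checkTrue}$ and $\textit{checkFalse}$ plus two. Your three-part decomposition (time bound, correctness invariant for $\textit{checkTrue}$/$\textit{checkFalse}$, alternation count via the switch-counting argument) is exactly the intended fleshing-out; the only minor imprecision is the claim that the while-loop iteration count is ``polynomial in $|\varphi|$ times the maximal size of $\WS$''---the multiset argument gives termination but the actual bound comes from the replacement forest having depth at most $|\varphi|$ and branching bounded by the certificate length, which still yields a singly exponential total and hence does not affect the result.
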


\section{$\LINAEXPTIME$-hardness of MC for $\BEbar$}\label{sec:LowerBound}

In this section, we show that the MC problem for the fragment $\BEbar$ is $\LINAEXPTIME$-hard (implying the $\LINAEXPTIME$-hardness of $\AAbarBBbarEbar$). The result is obtained by a polynomial-time reduction from
a variant of the domino-tiling problem for grids with rows and columns of
 exponential length called \emph{alternating multi-tiling problem}. 

An instance of this problem is a tuple $\Instance=\tpl{n,D,D_0,H,V, M,D_\acc}$, where: $n$ is a positive \emph{even} natural number encoded in unary; $D$ is a  non-empty finite set of \emph{domino types}; $D_0\subseteq D$ is a set of \emph{initial domino types}; $H\subseteq D\times D$ and $V\subseteq D\times D$  are the \emph{horizontal} and \emph{vertical matching relations}, respectively;
 $M\subseteq D\times D$ is the \emph{multi-tiling matching relation};
 $D_\acc\subseteq D$ is a set of \emph{accepting domino types}.
A \emph{tiling of $\Instance$} is a map assigning a domino type to each cell of a $2^{n} \times 2^{n}$ squared grid 
coherently with
the horizontal and vertical matching relations. Formally, a  tiling of $\Instance$  is a mapping   $f:[0,2^{n}-1]\times [0,2^{n}-1] \rightarrow D$ such that:
\begin{compactitem}
  \item for all $i,j\in [0,2^{n}-1]\times [0,2^{n}-1]$ with $j<2^{n}-1$, $(f(i,j),f(i,j+1))\in H$; 
  \item  for all $i,j\in [0,2^{n}-1]\times [0,2^{n}-1]$ with $i<2^{n}-1$, $(f(i,j),f(i+1,j))\in V$. 
\end{compactitem}
The \emph{initial condition} $\Init(f)$ of the tiling $f$ is  the content of the first row of $f$, namely  $\Init(f):= f(0,0)f(0,1)\ldots f(0,2^{n}-1)$.
%
%
A \emph{multi-tiling of $\Instance$} is a tuple $\tpl{f_1,\ldots,f_n}$ of $n$ tilings which are coherent w.r.t.\ the multi-tiling matching relation $M$, namely, such that:
\begin{compactitem}
  \item $(i)$ for all  $i,j\in [0,2^{n}-1]\times [0,2^{n}-1]$ and $\ell\in[1,n-1]$,  $(f_\ell(i,j),f_{\ell+1}(i,j))\in M$ (\emph{multi-cell requirement}), and
 $(ii)$  $f_n(2^{n}-1,j)\in D_\acc$ for some $j\in [0,2^{n}-1]$ (\emph{acceptance}).
\end{compactitem}
The \emph{alternating multi-tiling problem} for an instance $\Instance$ is checking whether
\begin{compactitem}
  \item $\forall w_1\in (D_0)^{2^{n}},\exists w_2 \in (D_0)^{2^{n}},\ldots,\forall w_{n-1}\in (D_0)^{2^{n}}, \exists w_n\in (D_0)^{2^{n}}$ such that there exists a multi-tiling $\tpl{f_1,\ldots,f_n}$ where for all $i\in [1,n]$, $\Init(f_i)=w_i$.
\end{compactitem}


\begin{theorem}[\cite{preprintGand}]\label{theo:ComplexityAlternatingMT}  The alternating multi-tiling problem is $\LINAEXPTIME$-complete.
\end{theorem}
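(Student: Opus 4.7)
My plan is to show both the $\LINAEXPTIME$ upper bound and matching hardness. For the upper bound, I exploit directly the quantifier structure of the problem. The prefix $\forall w_1\,\exists w_2 \cdots \forall w_{n-1}\,\exists w_n$ already exhibits $n$ alternations, and since $n$ is encoded in unary as part of $\Instance$, this is polynomial in $|\Instance|$. An ATM can guess each $w_i\in (D_0)^{2^{n}}$ within one alternation block using about $2^n\cdot\lceil\log_2 |D_0|\rceil$ nondeterministic bits, hence in exponential time. After the final existential block (guessing $w_n$), I extend the existential mode to additionally guess the full multi-tiling $\tpl{f_1,\ldots,f_n}$; each $f_\ell$ has exponential size $2^{2n}$, so the total guess is still exponential. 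A deterministic exponential-time verifier then checks pointwise the initial-row constraints $\Init(f_i)=w_i$, the horizontal ($H$) and vertical ($V$) matching conditions within each $f_\ell$, the multi-cell requirement $(f_\ell(i,j),f_{\ell+1}(i,j))\in M$ for all cells and all $\ell<n$, and the acceptance condition on the last row of $f_n$. This yields an $\LINAEXPTIME$ procedure with exactly $n$ alternations.

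For $\LINAEXPTIME$-hardness, I reduce from the canonical problem of acceptance by an alternating Turing machine $\mathcal{T}$ running in time $2^{p(m)}$ with $p(m)$ alternations on inputs of length $m$. Given $\mathcal{T}$ and input $\alpha$, choose $n$ polynomial in $|\alpha|$ so that $n\ge p(|\alpha|)$ and $2^n$ bounds both the time and space used by $\mathcal{T}$ on $\alpha$. Using the standard tableau encoding of Turing-machine computations, let tiling $f_\ell$ represent the computation tableau during the $\ell$-th alternation phase of $\mathcal{T}$: each row of $f_\ell$ encodes a configuration of width $2^n$, the horizontal matching relation $H$ enforces local well-formedness of configurations, and the vertical matching relation $V$ enforces $\mathcal{T}$'s transition function (including nondeterministic choices, which are absorbed into the vertical relation). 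The initial row $w_\ell$ encodes the starting configuration of phase $\ell$ together with the nondeterministic choices for that phase, so that the universal/existential quantification on $w_\ell$ mirrors exactly the universal/existential state types of $\mathcal{T}$ in phase $\ell$; since $\mathcal{T}$ has at most $p(|\alpha|)\le n$ alternations, I pad with idle phases to reach exactly $n$. The set $D_\acc$ marks domino types carrying $\mathcal{T}$'s accepting state, so acceptance of $\mathcal{T}$ corresponds to $D_\acc$ appearing in the last row of $f_n$.

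The hard part will be linking successive phases through the multi-cell relation $M$. Since $M$ is applied pointwise to corresponding cells of $f_\ell$ and $f_{\ell+1}$, while the natural continuity constraint between phases is row-based (the final configuration of $f_\ell$ must equal the initial configuration of $f_{\ell+1}$, which is $w_{\ell+1}$), I plan to adopt a product-style encoding: every domino type in $f_\ell$ carries, besides its local tableau content, a broadcast component that is constrained by $H$ and $V$ to remain invariant throughout $f_\ell$ and is set up to record the transfer information (the target first row). The relation $M$ then couples this broadcast component with the corresponding cells of $f_{\ell+1}$, enforcing the desired phase continuity. Making the encoding airtight while respecting the rigid $\forall\exists\forall\exists\cdots$ shape of the quantifier prefix is the technical core of the reduction; full details appear in \cite{preprintGand}.
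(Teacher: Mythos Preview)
The paper itself does not prove this theorem; it simply cites the extended version \cite{preprintGand}. So there is no in-paper argument to compare your approach to, and your sketch is in fact more detailed than what the paper offers.

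Your upper bound is correct. The quantifier prefix $\forall w_1\,\exists w_2\cdots\exists w_n$ followed by the existential guess of the whole multi-tiling gives an ATM with $n$ alternation blocks (polynomial, since $n$ is in unary); each guessed object has size at most $n\cdot 2^{2n}\cdot\lceil\log_2|D|\rceil$, and the local checks for $H$, $V$, $M$, the initial-row constraints, and acceptance are all doable in singly exponential time. This places the problem in $\LINAEXPTIME$.

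The hardness sketch, however, has a genuine gap that is not merely a matter of omitted routine detail. You let $w_\ell$ encode ``the starting configuration of phase $\ell$ together with the nondeterministic choices for that phase.'' For odd $\ell$ (universal), the adversary then chooses this starting configuration, but in a faithful simulation that configuration is \emph{determined} by the earlier phases. If the adversary picks an inconsistent starting configuration, the intended multi-tiling simply fails to exist (either the tiling constraints inside $f_\ell$ break, or the $M$-link to $f_{\ell-1}$ breaks), which makes the whole instance evaluate to \emph{false} --- the opposite of the vacuous \emph{true} you need when the universal player cheats. Your broadcast-component idea for $M$ does not address this: $M$ is a hard conjunctive constraint over all cells, so a mismatch anywhere kills every candidate multi-tiling rather than letting the existential player escape. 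A correct reduction must either (i) keep the starting configuration out of the universally quantified $w_\ell$ entirely (encode only the choice string there and thread configurations through $M$), or (ii) build an explicit ``cheat-detected'' escape into $D$, $H$, $V$, $M$, $D_\acc$ so that any inconsistent universal $w_\ell$ makes a trivial accepting multi-tiling available. You flag this as ``the technical core'' and defer to \cite{preprintGand}, but as written the sketch commits to an encoding that does not survive adversarial $w_\ell$, so the argument as presented is incomplete rather than merely abbreviated.
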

The fact that MC for the fragment $\BEbar$ is $\LINAEXPTIME$-hard is an immediate corollary of the following theorem.

\begin{theorem}\label{theo:MainLowerBoundResult}
One can construct, in time polynomial in the size of $\Instance$, a finite Kripke structure $\Ku_\Instance$ and a $\BEbar$ formula
$\varphi_\Instance$ over the set of propositions $\Prop = D \cup (\{r,c\}\times \{0,1\}) \cup \{\bot,\End\}$
such that  $\Ku_\Instance\models \varphi_\Instance$ iff $\Instance$ is a \emph{positive} instance of the alternating multi-tiling problem.
\end{theorem}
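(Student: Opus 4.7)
The plan is to reduce, in polynomial time, from the alternating multi-tiling problem (Theorem~\ref{theo:ComplexityAlternatingMT}). Given an instance $\Instance = \tpl{n,D,D_0,H,V,M,D_\acc}$, I would build a finite Kripke structure $\Ku_\Instance$ and a $\BEbar$ formula $\varphi_\Instance$ so that the trace structure of $\Ku_\Instance$ together with the modal and regular-expression structure of $\varphi_\Instance$ capture exactly the alternating game $\forall w_1\exists w_2\cdots\exists w_n.\, \exists \tpl{f_1,\ldots,f_n}$ underlying the alternating multi-tiling problem.

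For the Kripke structure, I would encode each cell $(i,j,d)$ of a tiling as a block of $2n+1$ consecutive states whose labels, drawn from $\Prop = D \cup (\{r,c\}\times\{0,1\}) \cup \{\bot,\End\}$, give the binary encoding of the row index $i$ (using $n$ states labeled by $\{r\}\times\{0,1\}$), of the column index $j$ (using $n$ states labeled by $\{c\}\times\{0,1\}$), and of the domino $d\in D$ (a single $d$-labeled state). Transitions of $\Ku_\Instance$ nondeterministically allow every choice of domino type at every cell, while the counters can be forced to increment correctly across consecutive cells by purely local, regular-expression-based constraints. The proposition $\End$ marks the boundaries between the $n$ initial rows $w_1,\ldots,w_n$ (each of length $2^n$ cells) and between the $n$ tilings $f_1,\ldots,f_n$; the proposition $\bot$ labels a sink state used to absorb forbidden continuations. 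Thus an initial trace of $\Ku_\Instance$ encodes a candidate multi-tiling together with a candidate tuple of initial rows.

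The formula $\varphi_\Instance$ is a conjunction of several pieces. The local properties---well-formedness of counters, correct positioning of $\End$-markers, horizontal matching (consecutive same-row cells carry dominos related by $H$), the initial condition $\Init(f_i)=w_i$ for every $i$, and acceptance (some cell in the last row of $f_n$ carries a domino in $D_\acc$)---are directly expressed as regular-expression atomic formulas over $\Prop$, combined with $[B]$ to enforce them on every relevant prefix. The global relational constraints---vertical matching (cells at $(i,j)$ and $(i+1,j)$ of the same $f_k$ carry dominos in $V$) and multi-cell matching (cells at $(i,j)$ of $f_k$ and $f_{k+1}$ carry dominos in $M$)---are handled by a defect-based argument: $[B]$ universally ranges over prefixes ending at a cell, and $\hsEt$ existentially guesses an extension containing the cell's vertical or inter-tiling partner, whose address comparison and domino compatibility are then verified locally by regular expressions. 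The alternating quantifier prefix $\forall w_1\exists w_2\cdots\exists w_n$ is realised by interleaving $[B]$ and $\hsB$ at the outermost layer of $\varphi_\Instance$: the first universal is supplied by the semantics of $\Ku_\Instance \models \varphi_\Instance$, which quantifies universally over initial traces, while each remaining quantifier is obtained by nesting one modality that selects the prefix corresponding to the next guessed $w_i$-block, exploiting the nondeterminism of $\Ku_\Instance$ inside that block.

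The main obstacle is that $\BEbar$ is quite restricted---only $\hsB$, $[B]$, $\hsEt$, $[\overline{E}]$ are available---so realising an $n$-fold alternation in one-to-one correspondence with the quantifiers of the alternating multi-tiling problem, without spurious alternations introduced by the local consistency checks, requires a careful separation of quantifier-bearing subformulas (which contribute alternations) from verifier subformulas (which are purely regular and alternation-free). A secondary difficulty is expressing vertical and multi-cell matching, since they relate cells that are $\Theta(n\cdot 2^n)$ positions apart in the trace; the defect-based guess-and-check using $\hsEt$ together with rich regular expressions over $2^{\Prop}$ for address comparison is what keeps $\varphi_\Instance$ of polynomial size. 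Once these two engineering points are dealt with, the equivalence $\Ku_\Instance \models \varphi_\Instance \iff \Instance\text{ is positive}$ follows by a routine induction on the quantifier alternation, and the polynomial-time bound on the reduction is then immediate.
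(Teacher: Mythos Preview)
Your plan has a genuine gap in the treatment of the quantifier alternation. The modalities $\hsB$ and $[B]$ range over \emph{proper prefixes of the current trace}; they do not introduce any fresh nondeterminism. Once the outermost universal quantification over initial traces has fixed a trace $\rho$, every block $w_1,w_2,\ldots,w_n,f_1,\ldots,f_n$ that you have laid out inside $\rho$ is already determined, and applying $\hsB$ afterwards merely selects a prefix of that same $\rho$. So ``exploiting the nondeterminism of $\Ku_\Instance$ inside that block'' via $\hsB$ is impossible: there is no nondeterminism left to exploit. In $\BEbar$ the only modalities that genuinely quantify over \emph{new} traces are $\hsEt$ and $[\overline{E}]$, which extend the current trace to the left. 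Your proposal uses $\hsEt$ only for the local defect checks and reserves $\hsB/[B]$ for the alternation, which is exactly backwards.

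The paper resolves this by a different architecture. The Kripke structure is the trivial one with $S=\Prop$, identity labelling, and all edges except that the initial state $\End$ has \emph{no} successors; hence the only initial trace is the length-$1$ trace $\End$, and the universal over initial traces is vacuous. All quantification is then performed by alternating $[\overline{E}]$ and $\hsEt$, which unravel $\Ku_\Instance$ \emph{backward} from $\End$: the formula has shape $[\overline{E}](\varphi_1\rightarrow\hsEt(\varphi_2\wedge\cdots\hsEt\varphi_\IMT))$, where the $k$-th backward extension guesses the initialization code $w_k$ and the innermost one guesses the whole multi-tiling. Two further design choices support this: (i) each grid position $(i,j)$ is encoded by a single \emph{multi-cell} $d_1\cdots d_n(r,b_1)\cdots(r,b_n)(c,b'_1)\cdots(c,b'_n)$ carrying the contents of \emph{all} $n$ tilings at once, so the multi-cell requirement $M$ becomes a purely local regular constraint rather than a long-distance comparison between separate tilings; and (ii) the non-local requirements (completeness, uniqueness, row/column adjacency, initialization coherence) are enforced not by an existential $\hsEt$ guess of a partner but by a universal $[\overline{E}]$ that generates one or two fresh candidate multi-cell codes to the left and then checks, via regular expressions, that whenever such codes already occur inside the multi-tiling portion of the current trace the relevant constraint holds. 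If you rework your reduction along these lines, the correctness argument becomes the routine induction you describe.
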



The rest of this section is devoted to the construction of the Kripke structure $\Ku_\Instance$ and the $\BEbar$ formula
$\varphi_\Instance$, proving Theorem~\ref{theo:MainLowerBoundResult}. Let $\Prop$ be  as in the statement
of Theorem~\ref{theo:MainLowerBoundResult}. The Kripke structure $\Ku_\Instance$ is given by
$\Ku_\Instance=(\Prop,S, \Trans,\mu,s_0)$, where $S=\Prop$, $s_0=\End$, $\mu$ is the identity mapping (we identify
a singleton set $\{p\}$ with $p$), and $\Trans = \{(s,s') \mid s\in \Prop\setminus \{\End\}, s'\in\Prop\}$. Note that the initial state $\End$ has no successor,
and that a trace of $\Ku_\Instance$ can be identified with its induced labeling sequence.

The construction of the $\BEbar$ formula
$\varphi_\Instance$ is based on a suitable encoding of multi-tilings which is described in the following. The symbols
$\{r\}\times   \{0,1\}$ and  $\{c\} \times \{0,1\}$ in $\Prop$
are used to encode the values of two $n$-bits counters numbering the $2^{n}$ rows and  columns, respectively, of a  tiling.
For a multi-tiling $F=\tpl{f_1,\ldots,f_n}$ and for all $i,j\in [0,2^{n}-1]$, the $(i,j)$-{th} \emph{multi-cell} $\tpl{f_1(i,j),\ldots,f_n(i,j)}$ of $F$ is encoded by
the word $C$ of length $3n$ over $\Prop$, called \emph{multi-cell code}, given by
$d_1 \cdots d_n (r, b_1)\cdots(r, b_n)(c,b'_1)\cdots(c, b'_n)$
where $b_1 \cdots b_n$ and $b'_1\cdots b'_n$ are the binary encodings of the row number $i$ and column number $j$, respectively, and for all $\ell\in[1,n]$, $d_\ell= f_\ell(i,j)$ (i.e., the content of the $(i,j)$-th cell of component $f_\ell$).
The \emph{content} of  $C$ is $d_1\cdots d_n$. 
Since $F$ is a multi-tiling,  the following well-formedness requirement must  be satisfied by the encoding $C$: for all $\ell\in [1,n-1]$, $(d_\ell,d_{\ell+1})\in M$. We call such words \emph{well-formed multi-cell codes}.

\begin{definition}[Multi-tiling codes]\label{Def:multiTilingCodes} A \emph{multi-tiling code} is a finite word $w$ over $\Prop$  obtained by concatenating well-formed multi-cell codes in such a way that the following conditions hold:
\begin{compactitem}
  \item for all $i,j\in [0,2^{n}-1]$, there is a multi-cell code in $w$  with row number $i$ and column number $j$ \emph{(completeness requirement)};
  \item for all multi-cell codes $C$ and $C'$ occurring in $w$, if $C$ and $C'$ have the same row number and column number, then $C$ and $C'$ have the same content \emph{(uniqueness requirement)};
   \item for all multi-cell codes $C$ and $C'$ in $w$ having the same row-number
   (resp., column number), column numbers  (resp., row numbers) $j$ and $j+1$, respectively, and contents $d_1\cdots d_n$ and $d'_1\cdots d'_n$, respectively, it holds that $(d_\ell,d'_\ell)\in H$ (resp. $(d_\ell,d'_\ell)\in V$) for all $\ell\in [1,n]$ \emph{(row-adjacency requirement)} (resp., \emph{(column-adjacency requirement)});
  \item there is a multi-cell code in $w$ with row-number $2^{n}-1$   whose content is in $D^{n-1}\cdot d_\acc$ for some $d_\acc\in D_\acc$ \emph{(acceptance requirement)}.
\end{compactitem}
\end{definition}

Finally, we have to encode the initial conditions of the components of a multi-tiling.
An \emph{initial cell code} encodes a cell of the first row of a tiling
%
and is a word $w$ of length $n+1$ of the form $w =d (c,b_1) \cdots (c,b_n)$, where $d\in D_0$ and $b_1,\ldots,b_n\in \{0,1\}$. We say that $d$ is the \emph{content} of $w$ and the integer in $[0,2^{n}-1]$ encoded by $b_1\cdots b_n$ is the \emph{column number} of $w$.

\begin{definition}[Multi-initialization codes]\label{Def:InitializationCodes} An \emph{initialization code} is a finite word $w$ over $\Prop$ which is the concatenation of initial cell codes such that:
\begin{compactitem}
  \item for all $i \in [0,2^{n}-1]$, there is an initial cell code in $w$  with column number $i$. 
  \item for all initial cell codes $C$ and $C'$ occurring in $w$, if $C$ and $C'$ have the same  column number, then $C$ and $C'$ have the same content.
\end{compactitem}
A \emph{multi-initialization code} is a finite word over $\Prop$ of the form $\bot\cdot w_n\cdots \bot \cdot w_1\cdot \End$
such that for all $\ell\in [1,n]$, $w_\ell$ is an initialization code.
\end{definition}

\begin{definition}[Initialized multi-tiling codes]\label{Def:IMTCodes} An \emph{initialized multi-tiling code} is a finite word over $\Prop$ of the form
$\bot \cdot w \cdot \bot\cdot w_n\cdots \bot \cdot w_1\cdot \End$
such that $w$ is a multi-tiling code, $\bot\cdot w_n\cdots \bot \cdot w_1\cdot \End$ is a multi-initialization code, and the following requirement holds:
\begin{compactitem}
   \item for each multi-cell code in $w$ having row number $0$, column number $i$, and content $d_1\cdots d_n$ and for all $\ell\in [1,n]$, there is
  an initial cell code in $w_\ell$ having column number $i$ and content $d_\ell$  \emph{(initialization coherence requirement)}.
\end{compactitem}
\end{definition}

\newcounter{prop-FormulasForMultiTilingCodes}
\setcounter{prop-FormulasForMultiTilingCodes}{\value{proposition}}

We sketch now the idea for the construction of the $\B\Ebar$ formula $\varphi_\Instance$ ensuring that $\Ku_\Instance\models \varphi_\Instance$ iff $\Instance$ is a positive instance of the alternating multi-tiling problem. We preliminarily  observe that  since the initial state of $\Ku_\Instance$ has no successors, the only initial trace of $\Ku_\Instance$ is the trace $end$ of length 1. To guess a trace corresponding to an initialized multi-tiling code, $\Ku_\Instance$ is unraveled backward starting from $end$, exploiting the modality $\Ebar$. The structure of the formula $\varphi_\Instance$ is 
\[
\varphi_\Instance:= [\overline{E}](\varphi_1 \rightarrow \hsEt(\varphi_2\wedge(\ldots ([\overline{E}](\varphi_{n-1} \rightarrow \hsEt(\varphi_n\wedge \hsEt \varphi_\IMT)))\ldots))).
\]
The formula $\varphi_\Instance$ features $n+1$ unravelling steps starting from the initial trace $end$. The first $n$ steps are used to guess a sequence of $n$ initialization codes. Intuitively, each formula $\varphi_i$ is used to constrain 
the $i$-th unravelling to be an initialization code, in such a way that at depth $n$ in the formula a multi-initialization code is under evaluation. 
The last unravelling step (the innermost in the formula) is used to guess the multi-tiling code. Intuitively, the innermost formula $\varphi_\IMT$ is evaluated over a trace corresponding to an initialized multi-tiling code, and checks its structure: 
multi-cell codes are \lq\lq captured\rq\rq{} by regular expressions (encoding in particular their row and column numbers and contents);
moreover the completeness, uniqueness, row- and column-adjacency requirements of Definition~\ref{Def:multiTilingCodes} are enforced by the joint use of $[\overline{E}]$ and regular expressions:
intuitively, by means of $[\overline{E}]$, one or two multi-cell codes are generated \lq\lq separately\rq\rq; then, if they appear in the considered multi-tiling code, the aforementioned constraints are verified by means of auxiliary formulas, consisting of suitable regular expressions.
The initialization coherence requirement of Definition~\ref{Def:IMTCodes} is guaranteed in an analogous way, by comparing initial cell codes and multi-cell codes.
Note that the first $n-1$ occurrences of alternations between universal and existential modalities  $[\overline{E}]$ and $\hsEt$ correspond to the alternations of universal and existential quantifications in the definition of alternating multi-tiling problem. 

Proposition~\ref{prop:FormulasForMultiTilingCodes} states the correctness of the construction of $\varphi_\Instance$ (for the definitions of $\varphi_1 ,\ldots, \varphi_n$, and $\varphi_\IMT$, see \cite{preprintGand}).

\begin{proposition}\label{prop:FormulasForMultiTilingCodes}  One can build, in  time polynomial in the size of $\Instance$,
$n+1$ $\BEbar$ formulas $\varphi_\IMT,\varphi_1,\ldots,\varphi_n$ such that $\AltN(\varphi_\IMT)=\AltN(\varphi_1) =\ldots =\AltN(\varphi_n)= 0$, and fulfilling the following conditions.
\begin{compactitem}
  \item For all finite words $\rho$ over $\Prop$ of the form $\rho =\rho' \cdot \bot \cdot w_n  \cdots  \bot \cdot w_1\cdot \End$ such that $\rho'\neq \varepsilon$ and
  $\bot \cdot w_n  \cdots  \bot \cdot w_1\cdot \End$ is a multi-initialization code, $\Ku_\Instance,\rho\models\varphi_\IMT$ if and only if
  $\rho$ is an initialized multi-tiling code.
  \item For all $\ell\in [1,n]$ and words $\rho$ of the form $\rho=\rho'\cdot \bot \cdot w_{\ell-1} \cdots \bot \cdot w_1\cdot \End$ such that
 $\rho'\neq \varepsilon$ and $w_j\in (\Prop\setminus \{\bot\})^{*}$ for all $j\in [1,\ell-1]$, $\Ku_\Instance,\rho\models\varphi_\ell$ if and only if
$\rho'$ is of the form  $\rho'=\bot\cdot w_\ell$, where $w_\ell$ is an initialization code.
\end{compactitem}
\end{proposition}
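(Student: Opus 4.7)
The plan is to build each of $\varphi_\ell$ and $\varphi_\IMT$ as a polynomial-size Boolean combination of regular expressions together with the prefix modalities $[B]$ and $\hsB$, avoiding any occurrence of $\hsEt$ or $[\overline{E}]$ inside them. Since $\AltN$ counts only alternations between $\hsEt$ and $[\overline{E}]$, this design automatically gives $\AltN(\varphi_\ell)=\AltN(\varphi_\IMT)=0$, and all quantifier alternations needed to encode the alternating multi-tiling problem will then reside in the outer $[\overline{E}]/\hsEt$ shell of $\varphi_\Instance$. Inside each formula, $[B]$ is used to quantify over every proper prefix of the trace under evaluation, while regular expressions attached to such prefixes both detect where the prefix ends (via its last $O(n)$ symbols) and enforce local consistency using the bounded look-ahead that a regex can express over the remaining part of the whole trace.

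For $\varphi_\ell$, a regular expression of size $O(n)$ first enforces the overall shape of $\rho$, namely a leading $\bot$ followed by a concatenation of initial cell codes $d\,(c,b_1)\cdots(c,b_n)$ with $d\in D_0$, followed by the fixed tail $\bot\cdot w_{\ell-1}\cdots\bot\cdot w_1\cdot\End$. To avoid an explicit enumeration of the $2^n$ possible column values, I force the initial cell codes of $w_\ell$ to appear in strictly increasing column order; completeness and uniqueness then collapse to three locally expressible conditions: the first cell has column $0$, the last has column $2^n-1$, and consecutive cells increment the column counter by one. The first two are $O(n)$-size regexes; the last is a $[B]\,\psi$ check in which $\psi$ is a disjunction, over bit positions $k\in[1,n]$, of regexes whose trailing $O(n)$ symbols identify the $k$-th bit of some column counter and whose forward $O(n)$ symbols verify the corresponding bit of the next cell according to the carry rule. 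The construction of $\varphi_\IMT$ is a direct generalisation: multi-cells of length $3n$ are enumerated in lexicographic order of $(i,j)$, and well-formedness of multi-cells ($(d_\ell,d_{\ell+1})\in M$), acceptance, horizontal adjacency ($H$), and vertical adjacency ($V$, which under lexicographic order simply relates a multi-cell with the unique same-column multi-cell of the next row) are all either local regex conditions or $[B]$-quantified regex comparisons of the same style.

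The main obstacle I expect is the initialization-coherence requirement of Definition~\ref{Def:IMTCodes}, which has to match a row-$0$ multi-cell of $w$ with the corresponding initial cell code in each $w_\ell$, even though these are separated by several $\bot$-blocks in $\rho$. The plan is to use $[B]$ to pick a prefix of $\rho$ ending exactly after a row-$0$ multi-cell, and, for every $\ell\in[1,n]$, to assert via a regular expression that some initial cell code inside $w_\ell$ carries the same column bits and domino label as the $\ell$-th component of that multi-cell; since the concrete bits and the domino letter cannot be hard-coded inside a single regex, one branches on each column-bit position and each domino label, yielding a regex of size $O(n^2|D|)$ per $\ell$. Once this is handled, all remaining requirements are routine, correctness follows by a straightforward induction on the shape of $\rho$, and the polynomial size of the resulting formulas is immediate.
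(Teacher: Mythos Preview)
Your plan rests on a misreading of the semantics. When you write $[B]\psi$ and evaluate it at $\rho$, the subformula $\psi$ (and in particular any regular expression inside it) is evaluated on the proper prefix $\rho'$, \emph{not} on the whole trace $\rho$: by the clause $\Ku,\rho'\models r \Leftrightarrow \mu(\rho')\in\Lang(r)$, a regex attached to a prefix sees only that prefix. There is no ``bounded look-ahead over the remaining part of the whole trace''. This invalidates the core mechanism you rely on throughout. Concretely, your increment check for consecutive column counters, your horizontal/vertical adjacency checks, and your initialization-coherence check all assume that from a prefix ending at some cell you can still inspect the next $O(n)$ symbols of $\rho$; you cannot.

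Even if one repairs the local checks by pushing everything into regexes over the full trace, the approach still breaks at the \emph{column-adjacency} (vertical) requirement. In lexicographic order the cells $(i,j)$ and $(i{+}1,j)$ are $2^{n}$ cells (i.e., $3n\cdot 2^{n}$ positions) apart, so any regex directly comparing them has exponential size; and using $[B]$ to end a prefix at $(i{+}1,j)$ does not help, because the matching cell $(i,j)$ sits at an unknown position inside the prefix and correlating its $n$ column bits with those at the end of the prefix again needs an exponential regex (nesting another $\hsB$ loses the outer cell). Your initialization-coherence plan has a further structural bug: in $\rho=\bot\cdot w\cdot\bot\cdot w_n\cdots\bot\cdot w_1\cdot\End$ the multi-tiling code $w$ comes \emph{first}, so any prefix ending after a row-$0$ multi-cell of $w$ contains no $w_\ell$ at all. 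Finally, forcing the cells into strict order makes your $\varphi_\ell$ reject legitimate (unordered) initialization codes, so the ``iff'' in the statement fails as written.

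The paper's construction avoids all of this precisely by \emph{using} $[\overline{E}]$ inside $\varphi_\IMT$ (which is still compatible with $\AltN=0$, since only the universal modality for $\overline{E}$ occurs): one or two candidate multi-cell codes are generated by left-extension at fixed positions at the beginning of the extended trace, and then a conjunction over the $O(n)$ symbol positions, combined with $\hsB$, checks whether each generated cell occurs in the original word and whether the relevant row/column bits agree. That ``generate externally, then match position-by-position'' trick is exactly what lets polynomial-size formulas compare two cells whose positions in $\rho$ are unknown and exponentially far apart; you should adopt it rather than trying to do everything with $B$-modalities alone.
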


Since the initial state of $\Ku_\Instance$ has no successors and corresponds to the atomic proposition $\End$, by
Proposition~\ref{prop:FormulasForMultiTilingCodes} and Definitions~\ref{Def:multiTilingCodes}--\ref{Def:IMTCodes}, we obtain that
$\Ku_\Instance\models \varphi_\Instance$ iff $\Instance$ is a positive instance of the alternating multi-tiling problem.
This concludes the proof of Theorem~\ref{theo:MainLowerBoundResult}. 
\section{Conclusions and future work}
In this paper, we have investigated the MC problem for two maximal fragments of $\HS$, $\AAbarBBbar\Ebar$ and $\AAbar\E\Bbar\Ebar$, endowed with 
interval labeling based on regular expressions, and we have proved that such a problem is 
$\LINAEXPTIME$-complete. The paper also settles, in the more general setting of the regular expression-based semantics,
the open complexity question for the same fragments under the homogeneity assumption.
Future work will focus on the problem of determining the exact complexity of MC for full $\HS$, both under homogeneity and in the regular expression-based semantics.
In addition, we will study the MC problem for $\HS$ over \emph{visibly pushdown systems} (VPS), in order to deal with recursive programs and infinite state systems.
Finally, we are thinking of inherently \emph{interval-based models of systems}. Kripke structures, being based on states, are naturally oriented to the description of point-based properties of systems, and of how they evolve state-by-state.
We want to come up with suitable (and practical) description paradigms for systems, which allow us to directly model them on the basis of their interval behavior/properties. Only after devising these models (something that seems to be extremely challenging), a really general interval-based MC will be possible.

\bibliographystyle{eptcs}
\bibliography{bib2}

\begin{thebibliography}{10}
\providecommand{\bibitemdeclare}[2]{}
\providecommand{\surnamestart}{}
\providecommand{\surnameend}{}
\providecommand{\urlprefix}{Available at }
\providecommand{\url}[1]{\texttt{#1}}
\providecommand{\href}[2]{\texttt{#2}}
\providecommand{\urlalt}[2]{\href{#1}{#2}}
\providecommand{\doi}[1]{doi:\urlalt{http://dx.doi.org/#1}{#1}}
\providecommand{\bibinfo}[2]{#2}

\bibitemdeclare{article}{All83}
\bibitem{All83}
\bibinfo{author}{J.~F. \surnamestart Allen\surnameend} (\bibinfo{year}{1983}):
  \emph{\bibinfo{title}{Maintaining Knowledge about Temporal Intervals}}.
\newblock {\sl \bibinfo{journal}{Communications of the ACM}}
  \bibinfo{volume}{26(11)}, pp. \bibinfo{pages}{832--843},
  \doi{10.1145/182.358434}.

\bibitemdeclare{article}{tcs15l}
\bibitem{tcs15l}
\bibinfo{author}{L.~\surnamestart Bozzelli\surnameend},
  \bibinfo{author}{H.~\surnamestart van Ditmarsch\surnameend} \&
  \bibinfo{author}{S.~\surnamestart Pinchinat\surnameend}
  (\bibinfo{year}{2015}): \emph{\bibinfo{title}{The Complexity of One-agent
  Refinement Modal Logic}}.
\newblock {\sl \bibinfo{journal}{Theoretical Computer Science}}
  \bibinfo{volume}{603}(\bibinfo{number}{C}), pp. \bibinfo{pages}{58--83},
  \doi{10.1016/j.tcs.2015.07.015}.

\bibitemdeclare{inproceedings}{sefm2017}
\bibitem{sefm2017}
\bibinfo{author}{L.~\surnamestart Bozzelli\surnameend},
  \bibinfo{author}{A.~\surnamestart Molinari\surnameend},
  \bibinfo{author}{A.~\surnamestart Montanari\surnameend} \&
  \bibinfo{author}{A.~\surnamestart Peron\surnameend} (\bibinfo{year}{2017}):
  \emph{\bibinfo{title}{{An in-Depth Investigation of Interval Temporal Logic
  Model Checking with Regular Expressions}}}.
\newblock In: {\sl \bibinfo{booktitle}{SEFM}}.
\newblock
  \urlprefix\url{https://www.dimi.uniud.it/la-ricerca/pubblicazioni/preprints/2.2017/}.

\bibitemdeclare{techreport}{preprintGand}
\bibitem{preprintGand}
\bibinfo{author}{L.~\surnamestart Bozzelli\surnameend},
  \bibinfo{author}{A.~\surnamestart Molinari\surnameend},
  \bibinfo{author}{A.~\surnamestart Montanari\surnameend} \&
  \bibinfo{author}{A.~\surnamestart Peron\surnameend} (\bibinfo{year}{2017}):
  \emph{\bibinfo{title}{{On the Complexity of Model Checking for Syntactically
  Maximal Fragments of the Interval Temporal Logic HS with Regular
  Expressions}}}.
\newblock \bibinfo{type}{Technical Report}, \bibinfo{institution}{University of
  Udine, Italy}.
\newblock
  \urlprefix\url{https://www.dimi.uniud.it/la-ricerca/pubblicazioni/preprints/3.2017/}.

\bibitemdeclare{inproceedings}{bmmps16}
\bibitem{bmmps16}
\bibinfo{author}{L.~\surnamestart Bozzelli\surnameend},
  \bibinfo{author}{A.~\surnamestart Molinari\surnameend},
  \bibinfo{author}{A.~\surnamestart Montanari\surnameend},
  \bibinfo{author}{A.~\surnamestart Peron\surnameend} \&
  \bibinfo{author}{P.~\surnamestart Sala\surnameend} (\bibinfo{year}{2016}):
  \emph{\bibinfo{title}{{Interval Temporal Logic Model Checking: the Border
  Between Good and Bad HS Fragments}}}.
\newblock In: {\sl \bibinfo{booktitle}{IJCAR}}, \bibinfo{series}{LNAI 9706},
  pp. \bibinfo{pages}{389--405}, \doi{10.1007/978-3-319-40229-1_27}.

\bibitemdeclare{inproceedings}{bmmps16c}
\bibitem{bmmps16c}
\bibinfo{author}{L.~\surnamestart Bozzelli\surnameend},
  \bibinfo{author}{A.~\surnamestart Molinari\surnameend},
  \bibinfo{author}{A.~\surnamestart Montanari\surnameend},
  \bibinfo{author}{A.~\surnamestart Peron\surnameend} \&
  \bibinfo{author}{P.~\surnamestart Sala\surnameend} (\bibinfo{year}{2016}):
  \emph{\bibinfo{title}{{Interval vs. Point Temporal Logic Model Checking: an
  Expressiveness Comparison}}}.
\newblock In: {\sl \bibinfo{booktitle}{FSTTCS}}, pp. \bibinfo{pages}{26:1--14},
  \doi{10.4230/LIPIcs.FSTTCS.2016.26}.

\bibitemdeclare{inproceedings}{bmmps16b}
\bibitem{bmmps16b}
\bibinfo{author}{L.~\surnamestart Bozzelli\surnameend},
  \bibinfo{author}{A.~\surnamestart Molinari\surnameend},
  \bibinfo{author}{A.~\surnamestart Montanari\surnameend},
  \bibinfo{author}{A.~\surnamestart Peron\surnameend} \&
  \bibinfo{author}{P.~\surnamestart Sala\surnameend} (\bibinfo{year}{2016}):
  \emph{\bibinfo{title}{{Model Checking the Logic of Allen's Relations Meets
  and Started-by is $\PTIME^{\NP}$-Complete}}}.
\newblock In: {\sl \bibinfo{booktitle}{GandALF}}, pp. \bibinfo{pages}{76--90},
  \doi{10.4204/EPTCS.226.6}.

\bibitemdeclare{article}{DBLP:journals/amai/BresolinMGMS14}
\bibitem{DBLP:journals/amai/BresolinMGMS14}
\bibinfo{author}{D.~\surnamestart Bresolin\surnameend},
  \bibinfo{author}{D.~\surnamestart {Della Monica}\surnameend},
  \bibinfo{author}{V.~\surnamestart Goranko\surnameend},
  \bibinfo{author}{A.~\surnamestart Montanari\surnameend} \&
  \bibinfo{author}{G.~\surnamestart Sciavicco\surnameend}
  (\bibinfo{year}{2014}): \emph{\bibinfo{title}{The dark side of interval
  temporal logic: marking the undecidability border}}.
\newblock {\sl \bibinfo{journal}{Annals of Mathematics and Artificial
  Intelligence}} \bibinfo{volume}{71(1-3)}, pp. \bibinfo{pages}{41--83},
  \doi{10.1007/s10472-013-9376-4}.

\bibitemdeclare{article}{CKS81}
\bibitem{CKS81}
\bibinfo{author}{A.~K. \surnamestart Chandra\surnameend},
  \bibinfo{author}{D.~C. \surnamestart Kozen\surnameend} \&
  \bibinfo{author}{L.~J. \surnamestart Stockmeyer\surnameend}
  (\bibinfo{year}{1981}): \emph{\bibinfo{title}{Alternation}}.
\newblock {\sl \bibinfo{journal}{Journal of the ACM}} \bibinfo{volume}{28(1)},
  pp. \bibinfo{pages}{114--133}, \doi{10.1145/322234.322243}.

\bibitemdeclare{article}{emerson1986sometimes}
\bibitem{emerson1986sometimes}
\bibinfo{author}{E.~A. \surnamestart Emerson\surnameend} \&
  \bibinfo{author}{J.~Y. \surnamestart Halpern\surnameend}
  (\bibinfo{year}{1986}): \emph{\bibinfo{title}{``{S}ometimes'' and ``not
  never'' revisited: on branching versus linear time temporal logic}}.
\newblock {\sl \bibinfo{journal}{Journal of the ACM}}
  \bibinfo{volume}{33}(\bibinfo{number}{1}), pp. \bibinfo{pages}{151--178},
  \doi{10.1145/4904.4999}.

\bibitemdeclare{article}{FR75}
\bibitem{FR75}
\bibinfo{author}{J.~\surnamestart Ferrante\surnameend} \&
  \bibinfo{author}{C.~\surnamestart Rackoff\surnameend} (\bibinfo{year}{1975}):
  \emph{\bibinfo{title}{{A Decision Procedure for the First Order Theory of
  Real Addition with Order}}}.
\newblock {\sl \bibinfo{journal}{SIAM Journal of Computation}}
  \bibinfo{volume}{4}(\bibinfo{number}{1}), pp. \bibinfo{pages}{69--76},
  \doi{10.1137/0204006}.

\bibitemdeclare{inproceedings}{DBLP:conf/ecp/GiunchigliaT99}
\bibitem{DBLP:conf/ecp/GiunchigliaT99}
\bibinfo{author}{F.~\surnamestart Giunchiglia\surnameend} \&
  \bibinfo{author}{P.~\surnamestart Traverso\surnameend}
  (\bibinfo{year}{1999}): \emph{\bibinfo{title}{Planning as Model Checking}}.
\newblock In: {\sl \bibinfo{booktitle}{ECP}}, \bibinfo{series}{LNCS 1809},
  \bibinfo{publisher}{Springer}, pp. \bibinfo{pages}{1--20},
  \doi{10.1007/10720246_1}.

\bibitemdeclare{article}{HS91}
\bibitem{HS91}
\bibinfo{author}{J.~Y. \surnamestart Halpern\surnameend} \&
  \bibinfo{author}{Y.~\surnamestart Shoham\surnameend} (\bibinfo{year}{1991}):
  \emph{\bibinfo{title}{A Propositional Modal Logic of Time Intervals}}.
\newblock {\sl \bibinfo{journal}{Journal of the ACM}} \bibinfo{volume}{38(4)},
  pp. \bibinfo{pages}{935--962}, \doi{10.1145/115234.115351}.

\bibitemdeclare{incollection}{kleene56representation}
\bibitem{kleene56representation}
\bibinfo{author}{S.~C. \surnamestart Kleene\surnameend} (\bibinfo{year}{1956}):
  \emph{\bibinfo{title}{Representation of Events in Nerve Nets and Finite
  Automata}}.
\newblock In: {\sl \bibinfo{booktitle}{Automata Studies}},
  \bibinfo{volume}{34}, \bibinfo{publisher}{Princeton University Press}, pp.
  \bibinfo{pages}{3--41}.

\bibitemdeclare{inproceedings}{LM13}
\bibitem{LM13}
\bibinfo{author}{A.~\surnamestart Lomuscio\surnameend} \&
  \bibinfo{author}{J.~\surnamestart Michaliszyn\surnameend}
  (\bibinfo{year}{2013}): \emph{\bibinfo{title}{An Epistemic {H}alpern-{S}hoham
  Logic}}.
\newblock In: {\sl \bibinfo{booktitle}{IJCAI}}, pp.
  \bibinfo{pages}{1010--1016}.
\newblock \urlprefix\url{http://dl.acm.org/citation.cfm?id=2540128.2540274}.

\bibitemdeclare{inproceedings}{LM14}
\bibitem{LM14}
\bibinfo{author}{A.~\surnamestart Lomuscio\surnameend} \&
  \bibinfo{author}{J.~\surnamestart Michaliszyn\surnameend}
  (\bibinfo{year}{2014}): \emph{\bibinfo{title}{Decidability of model checking
  multi-agent systems against a class of {EHS} specifications}}.
\newblock In: {\sl \bibinfo{booktitle}{{ECAI}}}, pp. \bibinfo{pages}{543--548},
  \doi{10.3233/978-1-61499-419-0-543}.

\bibitemdeclare{inproceedings}{lm15}
\bibitem{lm15}
\bibinfo{author}{A.~\surnamestart Lomuscio\surnameend} \&
  \bibinfo{author}{J.~\surnamestart Michaliszyn\surnameend}
  (\bibinfo{year}{2016}): \emph{\bibinfo{title}{Model Checking Multi-agent
  Systems Against Epistemic {HS} Specifications with Regular Expressions}}.
\newblock In: {\sl \bibinfo{booktitle}{KR}}, \bibinfo{publisher}{AAAI Press},
  pp. \bibinfo{pages}{298--307}.

\bibitemdeclare{inproceedings}{DBLP:conf/tacas/LomuscioR06}
\bibitem{DBLP:conf/tacas/LomuscioR06}
\bibinfo{author}{A.~\surnamestart Lomuscio\surnameend} \&
  \bibinfo{author}{F.~\surnamestart Raimondi\surnameend}
  (\bibinfo{year}{2006}): \emph{\bibinfo{title}{{MCMAS:} {A} Model Checker for
  Multi-agent Systems}}.
\newblock In: {\sl \bibinfo{booktitle}{TACAS}}, \bibinfo{series}{LNCS 3920},
  \bibinfo{publisher}{Springer}, pp. \bibinfo{pages}{450--454},
  \doi{10.1007/11691372_31}.

\bibitemdeclare{article}{MM14}
\bibitem{MM14}
\bibinfo{author}{J.~\surnamestart Marcinkowski\surnameend} \&
  \bibinfo{author}{J.~\surnamestart Michaliszyn\surnameend}
  (\bibinfo{year}{2014}): \emph{\bibinfo{title}{The Undecidability of the Logic
  of Subintervals}}.
\newblock {\sl \bibinfo{journal}{Fundamenta Informaticae}}
  \bibinfo{volume}{131(2)}, pp. \bibinfo{pages}{217--240},
  \doi{10.3233/FI-2014-1011}.

\bibitemdeclare{article}{MMMPP15}
\bibitem{MMMPP15}
\bibinfo{author}{A.~\surnamestart Molinari\surnameend},
  \bibinfo{author}{A.~\surnamestart Montanari\surnameend},
  \bibinfo{author}{A.~\surnamestart Murano\surnameend},
  \bibinfo{author}{G.~\surnamestart Perelli\surnameend} \&
  \bibinfo{author}{A.~\surnamestart Peron\surnameend} (\bibinfo{year}{2016}):
  \emph{\bibinfo{title}{Checking interval properties of computations}}.
\newblock {\sl \bibinfo{journal}{Acta Informatica}},
  \doi{10.1007/s00236-015-0250-1}.

\bibitemdeclare{inproceedings}{MMP15B}
\bibitem{MMP15B}
\bibinfo{author}{A.~\surnamestart Molinari\surnameend},
  \bibinfo{author}{A.~\surnamestart Montanari\surnameend} \&
  \bibinfo{author}{A.~\surnamestart Peron\surnameend} (\bibinfo{year}{2015}):
  \emph{\bibinfo{title}{{Complexity of ITL model checking: some well-behaved
  fragments of the interval logic HS}}}.
\newblock In: {\sl \bibinfo{booktitle}{TIME}}, pp. \bibinfo{pages}{90--100},
  \doi{10.1109/TIME.2015.12}.

\bibitemdeclare{inproceedings}{MMP15}
\bibitem{MMP15}
\bibinfo{author}{A.~\surnamestart Molinari\surnameend},
  \bibinfo{author}{A.~\surnamestart Montanari\surnameend} \&
  \bibinfo{author}{A.~\surnamestart Peron\surnameend} (\bibinfo{year}{2015}):
  \emph{\bibinfo{title}{A Model Checking Procedure for Interval Temporal Logics
  based on Track Representatives}}.
\newblock In: {\sl \bibinfo{booktitle}{CSL}}, pp. \bibinfo{pages}{193--210},
  \doi{10.4230/LIPIcs.CSL.2015.193}.

\bibitemdeclare{inproceedings}{MMPS16}
\bibitem{MMPS16}
\bibinfo{author}{A.~\surnamestart Molinari\surnameend},
  \bibinfo{author}{A.~\surnamestart Montanari\surnameend},
  \bibinfo{author}{A.~\surnamestart Peron\surnameend} \&
  \bibinfo{author}{P.~\surnamestart Sala\surnameend} (\bibinfo{year}{2016}):
  \emph{\bibinfo{title}{{Model Checking Well-Behaved Fragments of HS: the
  (Almost) Final Picture}}}.
\newblock In: {\sl \bibinfo{booktitle}{KR}}, pp. \bibinfo{pages}{473--483}.

\bibitemdeclare{phdthesis}{digitalcircuitsthesis}
\bibitem{digitalcircuitsthesis}
\bibinfo{author}{B.~\surnamestart Moszkowski\surnameend}
  (\bibinfo{year}{1983}): \emph{\bibinfo{title}{Reasoning About Digital
  Circuits}}.
\newblock Ph.D. thesis, \bibinfo{school}{Stanford University},
  \bibinfo{address}{CA}.

\bibitemdeclare{inproceedings}{pnueli1977temporal}
\bibitem{pnueli1977temporal}
\bibinfo{author}{A.~\surnamestart Pnueli\surnameend} (\bibinfo{year}{1977}):
  \emph{\bibinfo{title}{The temporal logic of programs}}.
\newblock In: {\sl \bibinfo{booktitle}{FOCS}}, \bibinfo{organization}{IEEE},
  pp. \bibinfo{pages}{46--57}, \doi{10.1109/SFCS.1977.32}.

\bibitemdeclare{article}{DBLP:journals/ai/PrattHartmann05}
\bibitem{DBLP:journals/ai/PrattHartmann05}
\bibinfo{author}{I.~\surnamestart Pratt{-}Hartmann\surnameend}
  (\bibinfo{year}{2005}): \emph{\bibinfo{title}{Temporal prepositions and their
  logic}}.
\newblock {\sl \bibinfo{journal}{Artificial Intelligence}}
  \bibinfo{volume}{166(1-2)}, pp. \bibinfo{pages}{1--36},
  \doi{10.1016/j.artint.2005.04.003}.

\bibitemdeclare{article}{Roe80}
\bibitem{Roe80}
\bibinfo{author}{P.~\surnamestart Roeper\surnameend} (\bibinfo{year}{1980}):
  \emph{\bibinfo{title}{Intervals and Tenses}}.
\newblock {\sl \bibinfo{journal}{J. Philosophical Logic}} \bibinfo{volume}{9},
  pp. \bibinfo{pages}{451--469}, \doi{10.1007/BF00262866}.

\bibitemdeclare{article}{Ven90}
\bibitem{Ven90}
\bibinfo{author}{Y.~\surnamestart Venema\surnameend} (\bibinfo{year}{1990}):
  \emph{\bibinfo{title}{Expressiveness and Completeness of an Interval Tense
  Logic}}.
\newblock {\sl \bibinfo{journal}{Notre Dame Journal of Formal Logic}}
  \bibinfo{volume}{31(4)}, pp. \bibinfo{pages}{529--547},
  \doi{10.1305/ndjfl/1093635589}.

\end{thebibliography}

\end{document}